\newcommand{\subjclass}[2][1991]{%
  \let\@oldtitle\@title%
  \gdef\@title{\@oldtitle\footnotetext{#1 \emph{Mathematics subject classification.} #2}}%
}
\newcommand{\keywords}[1]{%
  \let\@@oldtitle\@title%
  \gdef\@title{\@@oldtitle\footnotetext{\emph{Key words and phrases.} #1.}}%
}
\newtheorem{thm}{Theorem}[section]
\newtheorem{lem}[thm]{Lemma}
\newtheorem{cor}[thm]{Corollary}
\newtheorem{prop}[thm]{Proposition}
\newenvironment{proof}{\noindent\textsc{Proof: }}{\hfill$\fbox{}$\par\medskip\par}
\newenvironment{aenum}{\begin{enumerate}
 
 }{\end{enumerate}}
\newcommand{\R}{{\mathbb R}}
\newcommand{\Z}{{\mathbb Z}}
\newcommand{\N}{{\mathbb N}}
\newcommand{\cF}{{\cal F}}
\newcommand{\cK}{{\cal K}}
\newcommand{\card}{\mbox{\rm card\,}}
\def\mapright#1{\stackrel{#1}{\longrightarrow}}
\def\mapdown#1{\Big\downarrow\rlap{$\vcenter{\hbox{$\scriptstyle#1$}}$}}
\newcommand{\bdy}{{\partial}} 
\newcommand{\cb}{\mbox{\rm \bf cb\,}} 
\newcommand{\ma}{\texttt{m}\,} 
\newcommand{\sS}{\texttt{S}} 
\newcommand{\sA}{\texttt{A}}
\newcommand{\sB}{\texttt{B}}
\newcommand{\sC}{\texttt{C}} 
\newcommand{\sL}{\texttt{L}}
\title{A New Matching Algorithm for Multidimensional Persistence}
\author{Madjid Allili, Tomasz Kaczynski\thanks{This work was partially supported by NSERC Canada Discovery Grant  and IMA Minnesota}, Claudia Landi\thanks{Work  performed under the auspices of INdAM-GNSAGA}, Filippo Masoni}
\subjclass[2010]{Primary 65D18; Secondary 52C45, 57Q10}
\keywords{Multidimensional persistent homology, discrete Morse theory, acyclic partial matchings, matching algorithm}
\begin{document}

\maketitle

\begin{abstract}
An algorithm is presented that constructs an acyclic partial matching on the cells of a given simplicial complex from a vector-valued function defined on the vertices and extended to each simplex by taking the least common upper bound of the values on its vertices. The resulting acyclic partial matching may be used to construct a reduced filtered complex with the same multidimensional persistent homology as the original simplicial complex filtered by the sublevel sets of the function. Numerical tests show that in practical cases the rate of reduction in the number of cells achieved by the algorithm is substantial.  This promises to be useful for the computation of multidimensional persistent homology of  simplicial complexes filtered by  sublevel sets of vector-valued functions.
\end{abstract}

\section{Introduction}\label{sec:intro}

In the past decade, Forman's discrete Morse theory \cite{For98,Forman02} appeared to be useful for computing homology of complexes \cite{Har2014} and for providing filtration--preserving reductions of complexes in the study of persistent homology. Until recently, algorithms computing discrete Morse matchings have primarily been used for one--dimensional filtrations, see e.\ g.\ \cite{KinKnuMra05,RobWooShe11,MiNa}. However, there is currently a strong interest in combining persistence information coming from multiple functions in multiscale problems,  e.g. in biological applications  \cite{xia-wei}, which motivates extensions to generalized types of persistence. A more general setting of persistence modules on quiver complexes has been recently presented in \cite{Esc2014}. A parallel attempt in the direction of extending such algorithms to multidimensional filtrations  is our paper \cite{AlKaLa17}. In that paper, an initial framework related to Morse matchings for the multidimensional setting is proposed and an algorithm given by King et al.\ in \cite{KinKnuMra05} is extended. The algorithm produces a partition of the initial complex into three sets $(\sA,\sB,\sC)$ and a bijection $\ma: \sA \to \sB$ called matching. Any simplex which is not matched is added to $\sC$ and declared as critical. The matching algorithm of \cite{AlKaLa17} is used for establishing a reduction of a simplicial complex to a smaller but not necessarily optimal cellular complex. First experiments with filtrations of triangular meshes show that there is a considerable amount of cells identified by the algorithm as critical but which seem to be spurious, in the sense that they appear in clusters of adjacent critical faces which do not seem to carry significant topological information.

The main goal of the current work is to improve the matching method for optimality, in the sense of reducing the number of spurious critical cells. The improvement is in two directions. First, the matching algorithm in \cite{AlKaLa17} is an extension of the one in the 2005 King et al.\ paper \cite{KinKnuMra05} which processes the lower links of vertices and is not optimal even in the one-dimensional setting. Our new matching algorithm extends the one given in 2011 by Robins et al.\ \cite{RobWooShe11} for cubical complexes, which processes lower stars rather than lower links, and improves the result of \cite{KinKnuMra05} for optimality. Next, the new matching algorithm presented here emerges from the observation that, in the multidimensional setting, it is not enough to look at lower stars of vertices: one should take into consideration the lower stars of simplices of all dimensions, as there may be vertices of a simplex which are not comparable in the partial order of the multi-filtration. The vector-valued function initially given on vertices of a complex is first extended to simplices of all dimensions by taking the least common upper bound of the values on their vertices. Then the algorithm processes the lower stars of all simplices, not only the vertices. The resulting acyclic partial matching may be used to construct a reduced filtered Lefschetz complex with the same multidimensional persistent homology as the original simplicial complex filtered by the sublevel sets of the function. This promises to be useful for the computation of multidimensional persistent homology of  simplicial complexes filtered by  sublevel sets of vector-valued functions.

The paper is organized as follows. In Section~\ref{sec:prel}, the preliminaries are introduced. We recall the definition of simplicial complex, which is the input for our matching algorithm, of partial and acyclic matching, and of multidimensional filtration. A preliminary topological sorting Algorithm~\ref{alg:sorting} is presented.

In Section~\ref{sec:alg}, the main Algorithm~\ref{alg-match} is presented. Its correctness is proved and the complexity analyzed.

Section~\ref{per-hom} starts from recalling the notion of Lefschetz complex \cite{Lef42}, also studied under the name of $\sS$--complex in \cite{MrBa09}. These complexes are produced by applying the reduction method \cite{KaMrSl98,MrBa09,MiNa} to an initial simplicial complex, with the use of the matchings produced by our main Algorithm~\ref{alg-match}. Multidimensional persistent homology and the reduction method are recalled. The main feature of the reduction method is preserving persistent homology. It is presented in Corollary~\ref{cor:homology-S-iso-C}.

In Section~\ref{sec:experiments} experiments on synthetic and real 3D data are presented. These experiments show that in practical cases the rate of reduction in the number of cells achieved by the algorithm is substantial. Statistics are presented in Tables \ref{tab:sphere}, \ref{tab:torus}, \ref{tab:klein}, and \ref{tab:space}.

The improvement is observed in practice but the optimality of Morse reduction is not yet well defined in the multidimesional setting. Recall that, in the classical smooth case, the singularities of vector-valued functions on manifolds are in general not isolated; they form submanifolds of lower dimension. An appropriate extension of the Morse theory to multidimensional functions is not much investigated yet. Some related work is that of Edelsbrunner and Harer \cite{EdHa02} on Jacobi sets and of Patel \cite{Pat10} on preimages of maps between manifolds. However there are essential differences between those concepts and our sublevel sets with respect to the partial order relation.

\section{Working Assumptions}\label{sec:prel}

\subsection{Simplicial framework}\label{sec:simpl}

In this paper, we shall primarily work in the framework of finite geometric simplicial complexes $\cK$ in the Euclidean space $\R^d$. More precisely, the main result of this paper, the Matching Algorithm~\ref{alg-match} takes as input any finite geometric simplicial complex $\cK$ in the Euclidean space $\R^d$. On the other hand, its applications to computing multidimensional persistent homology of $\cK$ are expressed in the language of Lefschetz complexes which are an algebraic abstraction of cellular complexes. The structure of Lefschetz complexes and the notion of persistent homology will be reviewed in Section~\ref{per-hom}.

Let us recall that a {\em  geometric $q$-simplex} $\sigma = [v_0,v_1, \ldots, v_q]\in\cK$ is the convex hull of $q + 1$ affinely independent vertices $v_0$,$v_1$, $\ldots$, $v_q$ in $\R^d$. The number $q$ is its dimension. In programming, $\sigma$ can be identified with the list of its vertices, which is called an {\em abstract simplex}. A {\em face} of $\sigma$ is a simplex $\tau$ whose vertices constitute a subset of $\{v_0,v_1,\ldots ,v_q\}$. If $\dim \tau = q-1$, it is called a {\em primary face} or, for short, a {\em facet} of $\sigma$. In this case, $\sigma$ is called a {\em primary coface} or a {\em cofacet} of $\tau$, and we write $\tau < \sigma$.

The set $\cK_q\subset \cK$ is a collection of all simplices in $\cK$ dimension $q$. In particular, $\cK_0$ is the set of vertices in $\cK$. Given $\sigma\in\cK$, we write $\cK_0(\sigma)$ for the set of vertices of $\sigma$. The collection $\cK$ is called a {\em simplicial complex} if every face of a simplex in $\cK$ is also in $\cK$ and the intersection of any two simplices in $\cK$, if non-empty, is their common face. We denote either by $K$ or by $|\cK|$ the {\em carrier} of  $\cK$ which is the union of all its simplices. Thus $K$ is a polyhedron in $\R^d$ and $\cK$ its triangulation.

\subsection{Partial matching}\label{sec:match}

A {\em partial matching} $(\sA,\sB,\sC,\ma)$ on $\cK$ is a partition of $\cK$ into three
sets $\sA,\sB,\sC$ together with a bijective  map $\ma: \sA\to \sB$  such that, for each $\tau\in \sA$, $\ma(\tau)$ is a cofacet of $\tau$.

An {\em $\ma$--path} is a sequence
\begin{equation}\label{eq:m-path}
\sigma_0,\tau_0,\sigma_1,\tau_1,\ldots ,\sigma_p,\tau_p,\sigma_{p+1}
\end{equation}
such that, for each $i=0,\ldots ,p$, $\sigma_{i+1}\ne \sigma_i$, $\tau_i=\ma(\sigma_i)$, and $\tau_i$ is a
cofacet of  $\sigma_{i+1}$.

A partial matching $(\sA,\sB,\sC,\ma)$ on $\cK$ is called {\em acyclic} if there does not exist a closed $\ma$--path, that is a path as in (\ref{eq:m-path}) such that, $\sigma_{p+1}=\sigma_0$.

A convenient way to phrase the definition of an acyclic partial matching is via the {\em Hasse diagram} of $\cK$. It is the directed graph whose vertices are elements of $\cK$, edges are given by cofacet relations, and oriented from the larger element to the smaller one. Given a partial matching $(\sA,\sB,\sC,\ma)$ on  $\cK$, we change the orientation of the edge $(\tau, \sigma)$ whenever $\tau = \ma(\sigma)$. Thus the $\ma$--path in (\ref{eq:m-path}) can be displayed as
\begin{equation}\label{eq:Hasse}
\sigma_0 \xrightarrow{\ma}  \tau_0 \xrightarrow{>} \sigma_1 \xrightarrow{\ma} \tau_1 \xrightarrow{>} \ldots \xrightarrow{>} \sigma_n \xrightarrow{\ma}
\tau_p \xrightarrow{>} \sigma_{p+1}
\end{equation}
where $\ma$ stands for the matching, and the symbol $>$ for the cofacet relation. The acyclicity means that the oriented graph obtained this way, which is also called the {\em modified Hasse diagram} of $\cK$, has no nontrivial cycles. A directed graph with no directed cycles is called a directed acyclic graph (DAG). Thus, a partial matching $(\sA,\sB,\sC,\ma)$ on $\cK$ is acyclic if its corresponding modified Hasse diagram is a DAG.

\medskip

It is important to emphasize that the sets of simplices $\sA, \sB, \sC$  of $\cK$ in general are not simplicial complexes. Nevertheless, the set $\sC$ of critical simplices can be given a combinatorial structure of a Lefschetz complex discussed in Section~\ref{per-hom} and a  topology of a CW complex \cite{Mun84}.

\subsection{Multidimensional filtration}\label{sec:md-f}

The main goal of this paper is to produce a partial matching which preserves the filtration of $\cK$ by sublevel sets of a vector-valued function $f: \cK_0 \to \R^k$ given on the set of vertices of $\cK$. We define these notions below.

\medskip

Let now $\cK$ be a  simplicial complex of cardinality $N$, and let $n$ denote the cardinality of $\cK_0$.

We assume that $f: \cK_0 \rightarrow \mathbb{R}^k$ is a function  on the set of vertices of $\cK$ which is {\em component-wise injective},
that is, whose components $f_i$ are injective. Note that  this is a stronger condition than assuming that $f$ is injective.

Given any function $\tilde f: \cK_0 \rightarrow \mathbb{R}^k$, we can obtain a component-wise injective function $f$ which is arbitrarily
close to $\tilde f$ via the following procedure.
For $i=1,\ldots ,k$, let us set $\eta_i=\min\{|\tilde f_i(v)-\tilde f_i(w)|: v,w\in \cK_0 \wedge \tilde f_i(v)\ne \tilde f_i(w)\}$.
For each $i$ with $1\le i\le k$, we can assume that the $n$ vertices in $\cK_0$ are indexed by a integer index $j$, with $1\le j\le n$, increasing with $\tilde f_i$. Thus, the
function $f_i:\cK_0 \rightarrow \mathbb{R}$  can be  defined by setting $f_i(v_j)=\tilde f_i(v_j)+j\eta_i/ n^s$, with $s\ge 1$
(the larger $s$, the closer $f$ to $\tilde f$). Finally, it is sufficient to set $f=(f_1,f_2,\ldots, f_k)$.

We extend $f$ to a function $f: \cK \to \R^k$ as follows:
\begin{eqnarray}\label{max}
f(\sigma) = (f_1 (\sigma), \ldots, f_k (\sigma)) \qquad \mbox{with} \qquad f_i (\sigma) = \max_{v \in \cK_0 (\sigma)} f_i (v).
\end{eqnarray}

Any function $f : {\cK} \to \R^k$  that is an extension of a component-wise injective function $f : {\cK}_0 \to \R^k$ defined on
the vertices of the complex $\cK$ in such a way that $f$ satisfies equation~(\ref{max}) will be called \emph{admissible}.

\medskip

In $\R^k$ we consider the following partial order. Given two values $a=(a_i), b=(b_i)\in\R^k$ we set $a\preceq b$ if and only if
$a_i\le b_i$ for every $i$ with $1\le i\le k$. Moreover we write $a\precneqq b$ whenever $a\preceq b$ and $a\ne b$.

The {\em sublevel set filtration} of $\cK$ induced by an admissible function $f$ is the family $\{\cK^a\}_{a\in \R^k}$ of subsets of $\cK$ defined as follows:
\[
{\cK}^a=  \{\sigma=[v_0,v_1,\ldots,v_q]\in {\cK} \mid f(v_i)\preceq a, \ i=0,\ldots ,q\}.
\]
It is clear that, for any parameter value $a\in\R^k$ and any simplex $\sigma\in {\cK}^a$, all faces of $\sigma$ are also in ${\cK}^a$. Thus ${\cK}^a$ is a simplical subcomplex of $\cK$ for each $a$. The changes of topology of ${\cK}^a$ as we change the multiparameter $a$ permit recognizing some features of the shape of  $|{\cK}|$ if $f$ is appropriately chosen. For this reason, the function $f$ is called in the literature a {\em measuring function} or, more specifically, a {\em multidimensional measuring function} \cite{BiCe*08}.

\medskip

The {\em lower star} of a simplex is the set
\[
L(\sigma) = \{ \alpha \in \cK \mid  \sigma \subseteq \alpha
\quad \mbox{and} \quad f(\alpha) \preceq f(\sigma)\},
\]
and the {\em reduced lower stars} is the set
$L_*(\sigma)=L(\sigma)\setminus \{\sigma\}$.

\subsection{Indexing map}

An {\em indexing map} on the simplices of the complex $\cK$, compatible with an admissible function $f$, is a bijective map $I:{\cK} \to \{1,2,\ldots, N\}$ such that, for each
$\sigma, \tau \in {\cK}$ with $\sigma \ne \tau$, if $\sigma\subseteq \tau$ or $f(\sigma)\precneqq  f(\tau)$ then $I(\sigma)<I(\tau)$.

To build an indexing map $I$ on the simplices of the complex $\cK$, we will revisit the algorithm introduced in~\cite{AlKaLa17} that uses the topological sorting of a DAG
to build an indexing for vertices of a complex that is compatible with the ordering of values of a given function defined on the vertices.
We will extend the algorithm to build an indexing for all cells of a complex that is compatible with both the ordering of values of a given admissible
function defined on the cells and the ordering of the dimensions of the cells.

We recall that a topological sorting of a directed graph is a linear ordering of its nodes such that for every directed edge
$(u, v)$ from node $u$ to node $v$, $u$ precedes $v$ in the ordering. This ordering is possible if and only if the graph
has no directed cycles, that is, if it is a  DAG.

A simple well known algorithm (see~\cite{Wikipedia14,Kahn62}) for this task consists
of successively finding nodes of the DAG that have no incoming edges and placing them in a list for the final sorting. Note that
at least one such node must exist in a DAG, otherwise the graph must have at least one directed cycle.
The algorithm consists of two nested loops as presented in Algorithm \ref{alg:sorting}.

\begin{algorithm}
\caption{Topological sorting}
\label{alg:sorting}
\begin{algorithmic}[1]
\STATE {\bf Input:} A DAG whose list of nodes with no incoming edges is  \texttt{I}
\STATE {\bf Output:} The list \texttt{L} containing the sorted nodes
\WHILE {there are nodes remaining in \texttt{I}}
\STATE remove a node $u$ from \texttt{I}
\STATE add $u$ to \texttt{L}
\FOR {each node $v$ with an edge $e$ from $u$ to $v$ }
\STATE remove edge $e$ from the DAG
\IF {$v$ has no other incoming edges}
\STATE insert $v$ into \texttt{I}
\ENDIF
\ENDFOR
\ENDWHILE
\end{algorithmic}
\end{algorithm}

When the graph is a DAG, there exists at least one solution for the sorting problem, which is not necessarily unique.
We can easily see that each node and each edge of the DAG is visited once by the algorithm, therefore its running time
is linear in the number of nodes plus the number of edges in the DAG.

\begin{lem}\label{lem:index}
Let $f:\cK\to\R^k$ be an admissible function. There exists an injective function $I:{\cK} \to \N$ such that, for each $\sigma, \tau \in {\cK}$ with $\sigma \ne \tau$,
if $\sigma\subseteq \tau$ or $f(\sigma)\precneqq  f(\tau)$ then $I(\sigma)<I(\tau)$.
\end{lem}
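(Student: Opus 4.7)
The plan is to build a directed graph $G$ on the node set $\cK$ and apply the topological sorting Algorithm~\ref{alg:sorting} to it, provided I can show that $G$ is a DAG. Specifically, I place a directed edge from $\sigma$ to $\tau$ whenever $\sigma \ne \tau$ and either $\sigma \subsetneq \tau$ or $f(\sigma) \precneqq f(\tau)$. If I can confirm acyclicity, then any topological sort produced by Algorithm~\ref{alg:sorting} yields a bijection $I:\cK \to \{1,2,\ldots,N\} \subset \N$, and for every edge $(\sigma,\tau)$ the sort satisfies $I(\sigma) < I(\tau)$. This is precisely the conclusion of the lemma.

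The central step, and the only nontrivial one, is showing that $G$ has no directed cycles. The starting observation is that admissibility forces monotonicity of $f$ along the face relation: if $\sigma \subseteq \tau$ then $\cK_0(\sigma) \subseteq \cK_0(\tau)$, so $f_i(\sigma) = \max_{v \in \cK_0(\sigma)} f_i(v) \leq \max_{v \in \cK_0(\tau)} f_i(v) = f_i(\tau)$ for each coordinate $i$, hence $f(\sigma) \preceq f(\tau)$. Consequently, every edge of $G$, whether of the face type or the value type, satisfies $f(\sigma) \preceq f(\tau)$.

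Suppose now that $\sigma_0 \to \sigma_1 \to \cdots \to \sigma_m = \sigma_0$ is a directed cycle. Summing the inequalities along the cycle gives $f(\sigma_0) \preceq f(\sigma_1) \preceq \cdots \preceq f(\sigma_m) = f(\sigma_0)$, so all the $f$-values along the cycle are equal. In particular, no edge in the cycle can be of the value type (which would give a strict inequality $f(\sigma_i) \precneqq f(\sigma_{i+1})$). Therefore every edge must be of the face type, giving $\sigma_0 \subsetneq \sigma_1 \subsetneq \cdots \subsetneq \sigma_m = \sigma_0$. But this forces $\dim \sigma_0 < \dim \sigma_m = \dim \sigma_0$, a contradiction. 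Hence $G$ is a DAG.

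The main obstacle is really this dichotomy between the two edge types; once the admissibility-driven monotonicity is noticed, the cycle argument is forced into one of the two pure cases and both are immediately impossible. After acyclicity is established, invoking Algorithm~\ref{alg:sorting} (with initial list \texttt{I} consisting of all simplices with no incoming edges, which is nonempty in any DAG) produces the desired ordering; by construction the resulting $I$ is injective, takes values in $\N$, and respects both the face partial order and the $\precneqq$ order on $f$-values, as required.
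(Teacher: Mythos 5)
Your proof is correct and follows essentially the same route as the paper: both reduce the problem to topological sorting of a directed graph on $\cK$ whose edges encode the face relation and the strict order $\precneqq$ on $f$-values, with acyclicity resting on the monotonicity $\sigma\subseteq\tau\Rightarrow f(\sigma)\preceq f(\tau)$ forced by the admissibility formula. The paper phrases this as checking that the relation is a partial order and calls the verification straightforward; your cycle argument simply spells out that check explicitly.
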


\begin{proof}
The set ${\cK}$ is partially ordered by the following relation: $\sigma \sqsubseteq \tau$ if and only if either
$\sigma=\tau$ or $\sigma\ne \tau$ and, in the latter case, $\sigma$ is a face of $\tau$ or  $f(\sigma)\precneqq  f(\tau)$.
Indeed, it can be straightforwardly checked that this relation is reflexive, antisymmetric and transitive.
Hence $({\cK}, \sqsubseteq)$ can be represented in a directed graph by its Hasse diagram that is acyclic.

The topological sorting Algorithm \ref{alg:sorting}  allows us to sort and store the simplices
in $\cK$ in an array $\sL$ of size $N$, with indexes that can be chosen from 1 to $N$. It follows that the map $I:{\cK} \to \{1,2,\ldots,N\}$ that
associates to every node its index in the array $\sL$ is bijective. Moreover, and due to the topological sorting, $I$ satisfies the constraint
that for $\sigma, \tau \in {\cK}$ with $\sigma \ne \tau$, if $\sigma\subseteq \tau$ or $f(\sigma)\precneqq  f(\tau)$, then $I(\sigma)<I(\tau)$.
\end{proof}

\section{Matching Algorithm}
\label{sec:alg}

\subsection{The algorithm}

The main contribution of this paper is the Matching Algorithm~\ref{alg-match}.
It uses the following input data:

\begin{enumerate}
\item A finite simplicial complex $\cK$.

\item An admissible function $f:\cK\to \R^k$. Typically, it is obtained from a component-wise injective function $f : {\cK}_0 \to \R^k$   using  the extension formula given in equation~(\ref{max}).

\item An indexing map $I$ compatible with $f$. It can be precomputed using the topological sorting Algorithm \ref{alg:sorting}.
\end{enumerate}

We also use the following definitions:

\begin{enumerate}\setcounter{enumi}{3}
\item Given a simplex $\sigma$, we use \texttt{{unclass}\_{facets}}$_{\sigma}(\alpha$)
to denote the set of facets of a simplex $\alpha$ that are
in $L(\sigma)$ and have not been classified yet, that is, not inserted
in either $\sA$, $\sB$, or $\sC$, and \texttt{{num}\_{unclass}\_{facets}}$_{\sigma}(\alpha$) to denote the cardinality of
\texttt{{unclass}\_{facets}}$_{\sigma}(\alpha$).

\item We initialize \texttt{classified}($\sigma$)={\bf false}
for every $\sigma\in \cK$.

\item We use priority queues \texttt{PQzero} and \texttt{PQone}
which store candidates for pairings with zero and one unclassified
facets respectively in the order given by $I$. We initialize both as empty sets.
\end{enumerate}

The algorithm processes cells in the increasing order of their indexes given by the indexing map $I$ defined on $\cK$. Each cell $\sigma$ can be set to the states
of \texttt{classified}($\sigma$)={\bf true} or \texttt{classified}($\sigma$)={\bf false} so that if it is processed as part of a lower star of another cell it is not
processed again by the algorithm. The algorithm makes use of extra routines to calculate the cells in the lower star $L(\sigma)$ and the set of unclassified
faces \texttt{{unclass}\_{facets}}$_{\sigma}(\alpha$) of $\alpha$ in $L_*(\sigma)$ for each cell $\sigma \in \cK$ and each cell $\alpha \in L_*(\sigma)$.
The goal of the processing is to build a partition of $\cK$ into three lists $\sA$, $\sB$, and $\sC$ where $\sC$ is the list
of critical cells and in which each cell in $\sA$ is paired in a one-to-one manner with a cell in $\sB$ which defines a bijective map $\ma:\sA\to \sB$.
When a cell $\sigma$ is considered, each cell in its lower star $L(\sigma)$ is processed exactly once as shown in Lemma~\ref{lem:correctness3}. The cell $\sigma$ is inserted into the list of critical cells $\sC$ if $L_* (\sigma) = \emptyset$. Otherwise,
$\sigma$ is paired with the cofacet $\delta \in L_* (\sigma)$ that has minimal index value $I(\delta)$. The algorithm makes additional pairings which can be
interpreted topologically as the process of
constructing $L_*(\sigma)$ with simple homotopy expansions or the process of reducing $L_*(\sigma)$ with simple homotopy contractions.
When no pairing is possible a  cell is classified as critical and the
process is continued from that cell. A cell $\alpha$ is candidate for a pairing when \texttt{{unclass}\_{facets}}$_{\sigma}(\alpha$) contains exactly one element $\lambda$ that belongs to \texttt{PQzero} as shown in Lemma~\ref{lem:correctness1}. For this purpose, the priority queues \texttt{PQzero} and \texttt{PQone} which store cells with zero and
one available unclassified faces respectively are created. As long as \texttt{PQone} is not empty, its front is popped and either inserted into \texttt{PQzero} or paired with its single available unclassified face. When \texttt{PQone} becomes empty, the front cell of \texttt{PQzero} is declared as critical and inserted in $\sC$.

\begin{algorithm}[H]
\caption{Matching}
\label{alg-match}
\begin{algorithmic}[1]
\STATE {\bf Input:} A finite simplicial complex $\cK$ with an admissible function $f : {\cK} \to \R^k$
and an indexing map $I:{\cK} \to \{1,2,\ldots, N\}$ on its simplices compatible with $f$.
\STATE {\bf Output:} Three lists $\sA,\sB,\sC$ of simplices of $\cK$, and a function $\ma:\sA\to \sB$.\\
\FOR{$i=1$ to $N$}
\STATE $\sigma:=I^{-1}(i)$
\IF{\texttt{classified}($\sigma$)=\FALSE}
\IF{$L_*(\sigma)$ contains no  cells}
\STATE add $\sigma$ to $\sC$, \texttt{classified}($\sigma$)=\TRUE
\ELSE
\STATE $\delta:=$ the cofacet in $L_*(\sigma)$ of minimal index $I(\delta)$
\STATE add $\sigma$ to $\sA$ and $\delta$ to $\sB$ and define $\ma (\sigma) = \delta$, \texttt{classified}($\sigma$)=\TRUE, \texttt{classified}($\delta$)=\TRUE
\STATE add all  $\alpha \in L_*(\sigma)-\{\delta\}$ with \texttt{{num}\_{unclass}\_{facets}}$_{\sigma} (\alpha)= 0$ to \texttt{PQzero}
\STATE add all  $\alpha \in L_*(\sigma)$ with \texttt{{num}\_{unclass}\_{facets}}$_{\sigma} (\alpha)$ = 1 and  $\alpha > \delta$ to \texttt{PQone}
\WHILE{\texttt{PQone} $\neq \emptyset$ or \texttt{PQzero} $\neq \emptyset$}
\WHILE{\texttt{PQone} $\neq \emptyset$}
\STATE $\alpha :=$ \texttt{PQone}.pop\_front
\IF{\texttt{{num}\_{unclass}\_{facets}}$_{\sigma} (\alpha$) = 0}
\STATE add $\alpha$ to \texttt{PQzero}
\ELSE
\STATE add $\lambda \in \texttt{{unclass}\_{facets}}_{\sigma} (\alpha)$ to $\sA$, add $\alpha$ to $\sB$ and define $\ma (\lambda) = \alpha$,
\texttt{classified}($\alpha$)=\TRUE, \texttt{classified}($\lambda$)=\TRUE
\STATE remove $\lambda$ from \texttt{PQzero}
\STATE add all  $\beta \in L_*(\sigma)$ with \texttt{{num}\_{unclass}\_{facets}}$_{\sigma} (\beta$) = 1 and
either $\beta > \alpha$ or $\beta > \lambda$ to \texttt{PQone}
\ENDIF
\ENDWHILE
\IF{\texttt{PQzero} $\neq \emptyset$}
\STATE $\gamma :=$ \texttt{PQzero}.pop\_front
\STATE add $\gamma$ to $\sC$, \texttt{classified}($\gamma$)=\TRUE
\STATE add all  $\tau \in L_*(\sigma)$ with \texttt{{num}\_{unclass}\_{facets}}$_{\sigma} (\tau$) = 1 and
$\tau > \gamma$ to \texttt{PQone}
\ENDIF
\ENDWHILE
\ENDIF
\ENDIF
\ENDFOR
\end{algorithmic}
\end{algorithm}


We illustrate the algorithm by a simple example. We use the simplicial complex $\sS$ from our first paper \cite[Figure 2]{AlKaLa17} to compare the outputs of the previous matching algorithm and the new one. Figure~\ref{fig:matching2}(a) displays $\sS$ and the output of \cite[Algorithm 6]{AlKaLa17}.

The coordinates of vertices are the values of the function considered in \cite{AlKaLa17}. Since that function is not component-wise injective, we denote it by $\tilde{f}$ and we start from constructing a component-wise injective approximation $f$ discussed at the beginning of Section~\ref{sec:md-f}. We have $\eta_i=1$, $i=1,2$. Let $\epsilon=\epsilon_i=5^{-s}$ where $s\geq 1$ can be conveniently fixed for programming. For increasing order of $f_1$, the vertices are ordered as $(v_0,v_1,v_2,v_3,v_4)$. We get the values
\[
f_1(v_0)=\epsilon, f_1(v_1)=1+2\epsilon, f_1(v_2)=1+3\epsilon, f_1(v_3)=2+4\epsilon, f_1(v_4)=2+5\epsilon.
\]
For increasing order of $f_2$, the vertices are ordered as $(v_0,v_1,v_3,v_2,v_4)$. We get the values
\[
f_2(v_0)=\epsilon, f_2(v_1)=1+2\epsilon, f_2(v_3)=3\epsilon, f_2(v_2)=1+4\epsilon, f_2(v_4)=1+5\epsilon.
\]
If we interpret the passage from $\tilde{f}$ to $f$ as a displacement of the coordinates of vertices, the new complex $\cK$ is illustrated by Figure~\ref{fig:matching2}(b). The partial order relation is preserved when passing from $\tilde{f}$ to $f$, and the indexing of vertices in \cite[Figure 2]{AlKaLa17} may be kept for $f$. Hence, it is easy to see that \cite[Algorithm 6]{AlKaLa17} applied to $\cK$ gives the same result as that displayed in Figure~\ref{fig:matching2}(a).

In order to apply our new Algorithm~\ref{alg-match}, we need to index all $14$ simplices of $\cK$. For convenience of presentation, we label the vertices $w_i$, edges $e_i$, and triangles $t_i$ by the index values $i=1,2,...,14$. The result is displayed in Figure~\ref{fig:matching2}(b). The sequence of vertices $(v_0,v_1,v_2,v_3,v_4)$ is replaced by $(w_1,w_2,w_4,w_8,w_{12})$.  Here are the main steps of the algorithm:

\begin{center}
\begin{tabular}[b]{ll}
$i=1$: &   $L_*(w_1)=\emptyset$, add $w_1$ to $\sC$.\\
$i=2$: & $L_*(w_2)=\{e_3\}$, $\ma(w_2)=e_3$.\\
$i=3$: &  $e_3$ classified.\\
$i=4$: & \parbox[t]{10cm}{$L_*(w_4)=\{e_5,e_6,t_7\}$, $\ma(w_4)=e_5$,\\
 add $e_6$ to $\texttt{PQzero}$, add $t_7$ to $\texttt{PQone}$,\\
at line $15$, remove $\alpha=t_7$ from $\texttt{PQone}$,\\
 at line $19$, $\lambda=e_6$, $\ma(e_6)=t_7$, remove $e_6$ from $\texttt{PQzero}$.}\\
$i=5,6,7$: & $e_5, e_6, t_7$ classified.\\
 $i=8$: &  \parbox[t]{10cm}{ $L_*(w_8)=\{e_9,e_{10},t_{11}\}$, $\ma(w_8)=e_9$, \\
		add $e_{10}$ to $\texttt{PQzero}$, add $t_{11}$ to $\texttt{PQone}$,\\
    at line $15$, remove $\alpha=t_{11}$ from $\texttt{PQone}$, \\
    at line $19$, $\lambda=e_{10}$, $\ma(e_{10})=t_{11}$, remove $e_{10}$ from $\texttt{PQzero}$.}\\
$i=9,10,11$: & $e_9, e_{10}, t_{11}$ classified.\\
$i=12$: & \parbox[t]{10cm}{ $L_*(w_{12})=\{e_{13},e_{14}\}$, $\ma(w_{12})=e_{13}$, \\
		add $e_{14}$ to $\texttt{PQzero}$, $\texttt{PQone}=\emptyset$,\\
    at line $25$, add $\gamma=e_{14}$ to $\sC$.}\\
$i=13,14$: &  $e_{13}, e_{14}$ classified.
\end{tabular}
\end{center}

The output is displayed in Figure~\ref{fig:matching2}(b).

\begin{figure}[h]
\begin{center}
\begin{tabular}{ccc}
\psfrag{v0}{$v_0=(0,0)$}
\psfrag{v1}{$v_1=(1,0)$}
\psfrag{v2}{$v_2=(1,1)$}
\psfrag{v3}{$v_3=(2,0)$}
\psfrag{v4}{$v_4=(2,1)$}
 \includegraphics[width=0.40\textwidth]{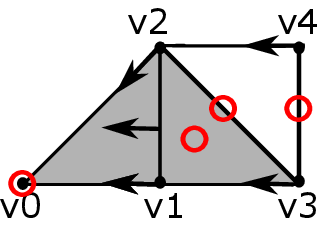}
& \ \ \ \ \  \ \ \ \ \  &
\psfrag{w1}{$w_1$}
\psfrag{w2}{$w_2$}
\psfrag{w4}{$w_4$}
\psfrag{w12}{$w_{12}$}
\psfrag{w8}{$w_8$}
\psfrag{e3}{$e_3$}
\psfrag{e5}{$e_5$}
\psfrag{e6}{$e_6$}
\psfrag{e9}{$e_9$}
\psfrag{e10}{$e_{10}$}
\psfrag{e13}{$e_{13}$}
\psfrag{e14}{$e_{14}$}
\psfrag{t7}{$t_7$}
\psfrag{t11}{$t_{11}$}
\includegraphics[width=0.45\textwidth]{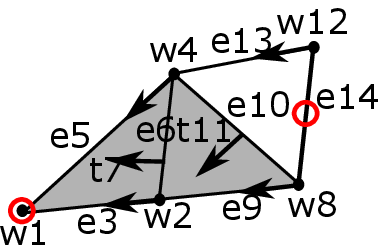}
\\
(a) &\ \ \ \ \  & (b)
\end{tabular}
\caption{In (a), the complex and output of Algorithm 6 of \cite{AlKaLa17} are displayed. Gray-shaded triangles are those which are present in the simplicial complex. Critical simplexes are marked by red circles and the matched simplexes are marked by arrows. In  (b), the complex is modified so to satisfy the coordinate-wise injectivity assumption. Labeling of all simplices by the indexing function and the output of Algorithm~\ref{alg-match} are displayed.
}
\label{fig:matching2}
\end{center}
\end{figure}


\subsection{Correctness}

Recall that $f=(f_1,\ldots, f_k): \cK_0 \rightarrow \mathbb{R}^k$ is a function  whose  components $f_i$ are injective on the vertices of $\cK$; moreover, $f$ is  extended to $f=(f_1,\ldots, f_k): \cK \rightarrow \mathbb{R}^k$ defined on cells $\sigma$ of any dimension by using formula (\ref{max}).  The assumption that $f$  is component-wise injective on the vertices is not sufficient to obtain disjoint lower stars, but when two lower
stars meet, then they get classified at the same time. This is expressed by the following statements.

\begin{lem}
\label{lem:inj}
The following statements hold:
\begin{enumerate}
\item[(1)] If $\tau\in L(\sigma)$, then $f(\tau)=f(\sigma)$.
\item[(2)] If $\tau\in L_*(\sigma)$, then $I(\sigma)<I(\tau)$.
\item[(3)] If $f(\sigma)=f(\tau)$ then there exists $\alpha \subseteq \sigma\cap \tau$ with $f(\alpha)=f(\sigma)=f(\tau)$.
\item[(4)] Assume that $\sigma_1$ and $\sigma_2$ are two  simplices  of $\cK$ such that $L(\sigma_1) \cap L(\sigma_2) \neq \emptyset$.
Then, there exists a simplex $\beta \in \cK$ such that $L(\sigma_1) \cup L(\sigma_2) \subseteq L(\beta)$ and $I(\beta) \le \min \{ I(\sigma_1), I(\sigma_2)\}$.
\end{enumerate}
\end{lem}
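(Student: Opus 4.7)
The plan is to take the four assertions in order, since each one feeds into the next. Claims (1) and (2) are essentially bookkeeping from the definitions. For (1), I will observe that $\tau\in L(\sigma)$ simultaneously gives $\sigma\subseteq\tau$, which yields $f(\sigma)\preceq f(\tau)$ from the extension formula (\ref{max}), and $f(\tau)\preceq f(\sigma)$ by the definition of $L(\sigma)$, so the two inequalities force equality. For (2), the strict containment $\sigma\subsetneq\tau$ follows from $\tau\in L_*(\sigma)$, and the compatibility of $I$ with strict face containment (via Lemma \ref{lem:index}) gives $I(\sigma)<I(\tau)$ directly.

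The substantive step is (3), and this is the only place where some care is needed; it is also where the strength of component-wise injectivity of $f$ on $\cK_0$, as opposed to mere injectivity of $f$, really matters. The argument I will use is: for each coordinate $i$, the maximum $f_i(\sigma)=\max_{v\in\cK_0(\sigma)}f_i(v)$ is attained at a unique vertex $u_i^\sigma\in\cK_0(\sigma)$ by injectivity of $f_i$, and similarly at a unique $u_i^\tau\in\cK_0(\tau)$. The chain of equalities $f_i(u_i^\sigma)=f_i(\sigma)=f_i(\tau)=f_i(u_i^\tau)$ together with the injectivity of $f_i$ forces $u_i^\sigma=u_i^\tau$, so each coordinate-max vertex is common to $\sigma$ and $\tau$. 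I then take $\alpha$ to be the face spanned by $\{u_1^\sigma,\dots,u_k^\sigma\}\subseteq\cK_0(\sigma)\cap\cK_0(\tau)$; it is a face of both simplices and hence an element of $\cK$ contained in $\sigma\cap\tau$. By construction $f_i(\alpha)=f_i(u_i^\sigma)=f_i(\sigma)$ for every $i$, so $f(\alpha)=f(\sigma)=f(\tau)$. The delicate point is recognising that the $k$ chosen vertices span a genuine common subsimplex rather than merely sharing membership in both vertex sets; the simplicial closure axiom takes care of this once the vertices are identified.

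With (1) and (3) in hand, (4) is an assembly. I will pick any $\gamma\in L(\sigma_1)\cap L(\sigma_2)$ and apply (1) twice to obtain $f(\sigma_1)=f(\gamma)=f(\sigma_2)$; then (3) produces a simplex $\beta\subseteq\sigma_1\cap\sigma_2$ with $f(\beta)=f(\sigma_1)=f(\sigma_2)$. This $\beta$ is my candidate. The inclusion $L(\sigma_j)\subseteq L(\beta)$ for $j\in\{1,2\}$ then follows by transitivity of $\subseteq$ and $\preceq$: for any $\delta\in L(\sigma_j)$, $\beta\subseteq\sigma_j\subseteq\delta$ and $f(\delta)\preceq f(\sigma_j)=f(\beta)$. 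Finally, $\beta\subseteq\sigma_j$ combined with the indexing compatibility yields $I(\beta)\le I(\sigma_j)$, with equality exactly when $\beta=\sigma_j$, so $I(\beta)\le\min\{I(\sigma_1),I(\sigma_2)\}$ as required.
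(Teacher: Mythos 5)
Your proof is correct and follows essentially the same route as the paper's: (1) and (2) from the definitions, (3) by locating the coordinate-wise maximizing vertices and using injectivity of each $f_i$ to identify them across $\sigma$ and $\tau$, and (4) by combining (1) and (3) to produce $\beta$ and then checking the inclusion and index bound. Your treatment of (3) is in fact slightly more complete than the paper's, which stops at ``$\sigma$ and $\tau$ have a common face'' without explicitly exhibiting $\alpha$ as the span of the vertices $u_1^\sigma,\dots,u_k^\sigma$ or verifying $f(\alpha)=f(\sigma)$.
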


\begin{proof}
$(1)$ If $\tau\in L(\sigma)$, then $f(\tau)\preceq f(\sigma)$ by definition of lower star. On the other hand,  since
$\sigma\subseteq \tau$, by definition of $f$, $f(\sigma)\preceq f(\tau)$. Thus $f(\sigma)=f(\tau)$.

\noindent
$(2)$ If $\tau\in L_*(\sigma)$, then $\sigma\subset \tau$ and the conclusion follows from the definition of the indexing map.

\noindent
$(3)$ If $f(\sigma)=f(\tau)$, then, for every $i$, $\max_{v\in\cK_0(\sigma)}f_i(v)=\max_{v\in\cK_0(\tau)}f_i(v)$. By
the injectivity of $f_i$, the two maxima must be attained at the same vertex. Therefore $\sigma$ and $\tau$ have a common face.

\noindent
$(4)$  If  there exists a simplex
$\gamma \in L(\sigma_1) \cap L(\sigma_2)$, then  we get $f(\gamma) = f(\sigma_1) = f(\sigma_2)$ from $(1)$. By  $(3)$,
there exists a simplex $\beta \subseteq \sigma_1 \cap \sigma_2$ such that $f(\beta) = f(\sigma_1) = f(\sigma_2)$. It is now clear that for any
$\delta \in L(\sigma_1) \cup L(\sigma_2)$, $\beta \subseteq \delta$ and  $f(\delta) = f(\sigma_1) = f(\sigma_2) = f(\beta)$, thus
$\delta \in L(\beta)$. By $(2)$, $I(\beta) \leq \min \{ I(\sigma_1), I(\sigma_2)\}$.
\end{proof}

\begin{lem}\label{lem:correctness1}
Assume that $\sigma$ is a cell in $\cK$. If $\alpha \in L_* (\sigma)$ is a cofacet of $\sigma$ then, at any stage of the algorithm , $\texttt{{num}\_{unclass}\_{facets}}_{\sigma} (\alpha)\le 1$, and it is equal to 1 if and only if $\sigma$ is still unclassified. In this case, the unclassified face of $\alpha$ is exactly $\sigma$.
\end{lem}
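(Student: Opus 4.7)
My plan centers on a purely dimensional observation about facets of $\alpha$, after which both conclusions drop out of the definitions. Recall that $\texttt{{unclass}\_{facets}}_{\sigma}(\alpha)$ is the set of facets $\tau$ of $\alpha$ that belong to $L(\sigma)$ and have not yet been classified; so the problem reduces to identifying which facets of $\alpha$ can ever lie in $L(\sigma)$.

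The key step is to show that the only facet of $\alpha$ that lies in $L(\sigma)$ is $\sigma$ itself. Since $\alpha$ is a cofacet of $\sigma$, one has $\dim\alpha = \dim\sigma + 1$, and consequently every facet $\tau$ of $\alpha$ satisfies $\dim\tau = \dim\sigma$. Membership of $\tau$ in $L(\sigma)$ requires in particular $\sigma \subseteq \tau$, which, combined with the equality of dimensions, forces $\sigma = \tau$. Conversely, $\sigma$ itself does belong to $L(\sigma)$ by reflexivity of inclusion and the trivial inequality $f(\sigma)\preceq f(\sigma)$, and it is a facet of $\alpha$ by the cofacet hypothesis. Hence the set of facets of $\alpha$ lying in $L(\sigma)$ equals $\{\sigma\}$.

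From this identification both assertions are immediate. At every stage of the algorithm, $\texttt{{unclass}\_{facets}}_{\sigma}(\alpha)\subseteq \{\sigma\}$, so its cardinality is at most $1$; it equals $1$ precisely when $\sigma$ is still unclassified, and in that case the unique unclassified face of $\alpha$ is $\sigma$ itself. I do not foresee any real obstacle here: the whole argument is the one-line dimensional observation above, and no bookkeeping on the priority queues, on the matching map $\ma$, or on the progression of the main loop is required.
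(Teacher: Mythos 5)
Your proof is correct and follows essentially the same route as the paper's: both arguments hinge on the dimension count $\dim\tau = \dim\alpha - 1 = \dim\sigma$ for any facet $\tau$ of $\alpha$, which together with the requirement $\sigma\subseteq\tau$ for membership in $L(\sigma)$ forces $\tau=\sigma$, so that $\texttt{{unclass}\_{facets}}_{\sigma}(\alpha)\subseteq\{\sigma\}$. Your write-up is, if anything, slightly cleaner in explicitly observing that $\sigma$ is itself a facet of $\alpha$ lying in $L(\sigma)$, which pins down the ``if and only if'' direction.
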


\begin{proof}
Let us assume that $\texttt{{num}\_{unclass}\_{facets}}_{\sigma} (\alpha)\ge 1$. For any unclassified face  $\gamma$ of $\alpha$ such that $\gamma \in L(\sigma)$, it holds that
$\dim \gamma = \dim \sigma$. Indeed, $\dim \gamma<\dim \alpha=\dim\sigma+1$, and $\dim\gamma\ge \dim\sigma$ because $\gamma\in L(\sigma)$. Thus, if $\gamma\ne \sigma$, the assumption $\gamma \in L(\sigma)$ is contradicted. Hence, if $\gamma\ne \sigma$ for every $\gamma$, we have $\texttt{{num}\_{unclass}\_{facets}}_{\sigma} (\alpha)=0$.  Otherwise, if $\gamma=\sigma$, then $\texttt{{num}\_{unclass}\_{facets}}_{\sigma} (\alpha)= 1$, concluding the proof.
\end{proof}

\begin{lem}\label{lem:i=1-2}
Let $\sigma\in \cK$ such that  $I(\sigma)=i$ with $i\le 2$. Then, $\sigma$ is a vertex. Moreover,
\begin{enumerate}
\item if $i=1$, then Algorithm~\ref{alg-match} classifies $\sigma$ as critical at line 7;
\item if $i=2$, then either $L(\sigma) = \{ \sigma\}$ or $L(\sigma) = \{ \sigma, \delta\}$ where $\delta$ is an edge whose vertices are the cells with indexes 1 and 2. Moreover, if $L(\sigma) = \{ \sigma\}$, then $\sigma$ is classified as critical at line 7; if $L(\sigma) = \{ \sigma, \delta\}$, then $\delta$ is an edge and $\sigma$ is  paired with  $\delta$ at line 9.
\end{enumerate}
\end{lem}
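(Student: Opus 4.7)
The plan is first to show that $\sigma$ must be a vertex when $I(\sigma)\le 2$, then to isolate a single structural observation on $L_*(\sigma)$ that drives both index cases, and finally to read off the algorithm's behaviour on its first two outer-loop iterations. The dimensional claim is immediate from the indexing map: any simplex of dimension at least one has at least two distinct vertices, each a proper face, whose indexes are strictly smaller than $I(\sigma)$ and mutually distinct by injectivity of $I$. This already forces $I(\sigma)\ge 3$, contradicting the hypothesis.

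The key structural observation is the following: for a vertex $\sigma$ with $I(\sigma)=i\in\{1,2\}$ and any $\alpha\in L_*(\sigma)$, every vertex $v$ of $\alpha$ with $v\ne\sigma$ must satisfy $I(v)<i$. Indeed, Lemma~\ref{lem:inj}(1) gives $f(\alpha)=f(\sigma)$, so the max-extension formula~(\ref{max}) yields $f_j(v)\le f_j(\alpha)=f_j(\sigma)$ for every $j$. Componentwise injectivity of $f$ on $\cK_0$ together with $v\ne\sigma$ then upgrades each inequality to a strict one, giving $f(v)\precneqq f(\sigma)$, and hence $I(v)<I(\sigma)$ by compatibility of $I$ with $f$.

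For (1), with $i=1$, no vertex $v$ can satisfy $I(v)<1$, so $L_*(\sigma)=\emptyset$, the test at line~6 succeeds, and $\sigma$ is added to $\sC$ at line~7. For (2), with $i=2$, any non-$\sigma$ vertex of $\alpha\in L_*(\sigma)$ must equal $I^{-1}(1)$; two distinct such vertices cannot coexist, so $\alpha$ has exactly the vertex set $\{\sigma,\,I^{-1}(1)\}$ and is the edge $\delta:=[I^{-1}(1),\sigma]$. Thus $L_*(\sigma)\subseteq\{\delta\}$. If $L(\sigma)=\{\sigma\}$, line~7 fires as in case~(1). Otherwise $L(\sigma)=\{\sigma,\delta\}$ and $\delta$ is the unique, hence minimum-index, element of $L_*(\sigma)$, so line~9 pairs $\sigma$ with $\delta$; one still verifies that $\delta$ is unclassified at the moment $i=2$ is reached, which follows because $I(\delta)>2$ rules out prior processing by the outer loop, and the only earlier iteration $i=1$ processed $I^{-1}(1)$ whose lower star is empty by part~(1).

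The only delicate point is the strict componentwise inequality $f(v)\precneqq f(\sigma)$ in the structural observation: this really uses componentwise injectivity rather than mere injectivity of $f$, since otherwise two distinct vertices could agree in some coordinate and escape the indexing hierarchy. Once that point is settled, everything else is a direct reading of the pseudo-code on its first two iterations.
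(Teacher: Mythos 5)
Your proof is correct and follows essentially the same route as the paper's: the same dimension argument for the vertex claim, and the same use of component-wise injectivity plus Lemma~\ref{lem:inj}(1) and the compatibility of $I$ with $f$ to force every non-$\sigma$ vertex of a cell in $L_*(\sigma)$ to have index strictly below $i$, from which both cases follow. Your packaging of this as a single structural observation covering $i=1$ and $i=2$, and your explicit check that $\delta$ is still unclassified at iteration $i=2$, are minor (and welcome) tightenings of the paper's argument rather than a different approach.
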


\begin{proof}
If $I(\sigma)\le 2$, then $\sigma$ needs to be a vertex. Indeed, if we assume  $\dim \sigma \geq 1$, then it can be written as $\sigma = \langle v_0, v_1, \ldots, v_k \rangle$ with $k \geq 1$. It follows that $\sigma$ has at least two faces of lower dimension and lower value by $f$ which should also have lower indexes than that of $\sigma$.
This contradicts the fact that $I(\sigma) \le 2$.

Let us now prove separately statements $1$ and $2$.

1. We note that \texttt{classified}($\sigma$)={\bf false} at line 5 because of the initialization. Moreover, $L_*(\sigma)$ is empty. To see this, let us observe that, for any coface  $\gamma$ of $\sigma$,  it must hold that $f_i(\sigma)<f_i(\gamma)$ for at least one index $i=1,\ldots, k$. Indeed, $\sigma$ is a vertex of $\gamma$ and at any other vertex of $\gamma$   the value of $f_i$ must be greater than $f_i(\sigma)$ because $f_i$ is injective and $\sigma$ has minimal index. Hence $\sigma$ gets classified at line 7 and there is no other cell in $L(\sigma)$ to classify.

2.  If $\alpha \in L_*(\sigma)$, then all the vertices in $\alpha$ other than $\sigma$ should have  $f$ values lower than $\sigma$.
They should therefore have lower indexes too. The only possibility left is to have $\alpha = \langle v, w \rangle$ where $I(v) = 1$ and $I(w)=2$.
\end{proof}

\begin{lem}\label{lem:line10}
  Assume that $\sigma\in \cK$ is unclassified when Algorithm~\ref{alg-match} reaches line 5 for $i=I(\sigma)$, and that, for all simplexes $\beta\in \cK$ with $I(\beta)<I(\sigma)$, it holds that \texttt{classified}$(\gamma)$={\bf true} for all cells $\gamma\in L(\beta)$. Then the following statements hold true:
\begin{enumerate}
\item[(i)] All simplexes in $L(\sigma)$ are  also unclassified at step 5 when $i=I(\sigma)$;
\item[(ii)] If Algorithm~\ref{alg-match} gets to line 9, then there exists at least one cofacet of $\sigma$. Moreover, the one with minimal index, say $\delta$, has exactly $\sigma$ as unclassified facet, and it is still unclassified. Thus $\sigma$ and $\delta$ get classified at line 10.
\item[(iii)] If $\alpha\in L_*(\sigma)$ and $\texttt{{num}\_{unclass}\_{facets}}_{\sigma} (\alpha)=0$ at line 11 of Algorithm~\ref{alg-match}, then $\alpha$ is a facet of $\sigma$.
\end{enumerate}
\end{lem}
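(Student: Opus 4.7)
The plan is to prove the three parts sequentially, using Lemma~\ref{lem:inj} and a careful reading of the four lines of Algorithm~\ref{alg-match} at which a cell is set \texttt{classified}, namely lines~7, 10, 19 and 26.

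For part~(i), I would argue by contradiction. Suppose some $\gamma \in L(\sigma)$ were already classified when line~5 is reached at $i = I(\sigma)$. An inspection of the four classification sites shows that the cell classified there always lies in $L(\sigma')$, where $\sigma'$ is the cell being processed by the outer for-loop at that iteration; for lines~7 and~10 this is immediate, while for lines~19 and~26 it follows because the priority queues \texttt{PQone} and \texttt{PQzero} are populated only with elements of $L_*(\sigma')$. Hence there exists $\sigma'$ with $I(\sigma') < I(\sigma)$ and $\gamma \in L(\sigma) \cap L(\sigma')$. Lemma~\ref{lem:inj}(4) then provides $\beta$ with $L(\sigma) \cup L(\sigma') \subseteq L(\beta)$ and $I(\beta) \leq I(\sigma') < I(\sigma)$, so by the hypothesis of the lemma every cell of $L(\beta)$ is classified. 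In particular $\sigma \in L(\beta)$ is classified, contradicting the assumption.

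For part~(ii), line~9 is reached only when $L_*(\sigma) \neq \emptyset$. I would first verify that $L_*(\sigma)$ then contains at least one cofacet of $\sigma$: pick any $\alpha \in L_*(\sigma)$ and any cofacet $\delta'$ of $\sigma$ with $\sigma \subsetneq \delta' \subseteq \alpha$; the sandwich $f(\sigma) \preceq f(\delta') \preceq f(\alpha) \preceq f(\sigma)$, which combines the extension formula~(\ref{max}) with the definition of $L(\sigma)$, forces $f(\delta') = f(\sigma)$ and hence $\delta' \in L_*(\sigma)$. So the minimum-index cofacet $\delta$ is well defined. By part~(i), $\delta$ is unclassified, and Lemma~\ref{lem:correctness1} gives $\texttt{num\_unclass\_facets}_\sigma(\delta) = 1$ with $\sigma$ as the unique unclassified facet. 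Line~10 then classifies both and records $\ma(\sigma) = \delta$.

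For part~(iii), reading the statement as ``$\alpha$ is a cofacet of $\sigma$'' (the word ``facet'' as written is incompatible with $\alpha \in L_*(\sigma)$), I would argue by a dimension count. At line~11, part~(i) together with line~10 tells us that the only cells of $L(\sigma)$ already classified are $\sigma$ and $\delta$. Assume $\dim\alpha \geq \dim\sigma + 2$ and write $\alpha = [v_0,\ldots,v_q,v_{q+1},\ldots,v_{\dim\alpha}]$ with $\sigma = [v_0,\ldots,v_q]$. Removing each $v_j$ for $j > q$ yields $\dim\alpha - \dim\sigma \geq 2$ distinct facets of $\alpha$ containing $\sigma$, and the same $f$-sandwich as above shows each of them lies in $L(\sigma)$. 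At most one of them can coincide with $\delta$, so $\texttt{num\_unclass\_facets}_\sigma(\alpha) \geq 1$, contradicting the hypothesis. Therefore $\dim\alpha = \dim\sigma+1$, i.e., $\alpha$ is a cofacet of $\sigma$.

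The step I expect to be trickiest is the bookkeeping in~(i), namely formalizing the assertion that every classification ever performed by Algorithm~\ref{alg-match} happens within the lower star of the cell currently processed by the outer loop. This is a straightforward but tedious case analysis over the four classification statements in the algorithm, and it is what allows Lemma~\ref{lem:inj}(4) to convert ``some earlier iteration touched $L(\sigma)$'' into ``$\sigma$ itself was already classified''.
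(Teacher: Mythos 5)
Your proof is correct and follows essentially the same route as the paper's: part (i) via Lemma~\ref{lem:inj}(4) and the hypothesis on earlier lower stars, part (ii) via the $f$-sandwich $f(\sigma)\preceq f(\delta')\preceq f(\alpha)\preceq f(\sigma)$ along a chain from $\sigma$ to a coface, and part (iii) by exhibiting at least two facets of $\alpha$ in $L(\sigma)$ when the codimension exceeds one, of which at most one can be the already-classified $\delta$. Your reading of ``facet'' as ``cofacet'' in (iii) matches the paper's own conclusion that $r=1$.
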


\begin{proof}
\begin{enumerate}
\item[{\em (i)}] If $L(\sigma)=\{\sigma\}$ the claim is true by assumption. Let us assume that  $\sigma$ has at least one coface $\alpha \in L_*(\sigma)$.  If $\alpha$ is  classified  it  belongs to the lower star of another cell $\beta$ different from $\sigma$ with $I(\beta)<I(\sigma)$. By Lemma~\ref{lem:inj}(4), also $\sigma$ belongs to $L(\beta)$ and, therefore, $\sigma$ is already classified by assumption. This gives a contradiction. Hence $\alpha$ is not classified.

\item[{\em (ii)}] If $\sigma$ had no cofaces, then $L_*(\sigma)$ would be empty. Therefore line 9 would not be reached, contradicting the hypothesis. So $\sigma$ has at least one coface $\alpha \in L_*(\sigma)$. By Lemma~\ref{lem:inj}(1), $f(\alpha)=f(\sigma)$. Assuming $\dim \sigma = p$ and $\dim \alpha = p+r$, there exists a sequence of simplices $\alpha_1, \ldots \alpha_{r-1}$ of dimensions $p+1, \ldots, p+r-1$ such that
    $$\sigma < \alpha_1 < \alpha_2 < \ldots < \alpha_{r-1} < \alpha.$$
 By definition of $f$, $f(\sigma)\preceq f(\alpha_h)\preceq f(\alpha)$ for $h=1,\ldots r-1$. Recalling that $f(\alpha)=f(\sigma)$ we see that $f(\alpha_1) = \ldots = f(\alpha_{r-1}) = f(\sigma)$. Thus $\alpha_h\in L_*(\sigma)$ for $h=1,\ldots r-1$. In particular, $\alpha_1$ is a cofacet of $\sigma$ that belongs to $L_*(\sigma)$. Every cofacet of $\sigma$ in $L_*(\sigma)$ has only $\sigma$ as unclassified facet in $L(\sigma)$ by Lemma~\ref{lem:correctness1}. Let $\delta$ be the cofacet of $\sigma$ with minimal index. Statement {\em (i)} implies that $\delta$ is still unclassified.

\item[{\em (iii)}] Let $\dim\sigma=p$ and $\dim\alpha=p+r$. If $r>1$ then there are at least two sequences $\sigma<\alpha_1<\ldots <\alpha_{r-1}<\alpha$ and $\sigma<\alpha_1'<\ldots <\alpha_{r-1}'<\alpha$ of cells belonging to  $L(\sigma)$ with $\alpha_{r-1}\ne \alpha_{r-1}'$. These cells $\alpha_{r-1}$ and  $\alpha_{r-1}'$ need to be already classified at line 11 because of the assumption $\texttt{{num}\_{unclass}\_{facets}}_{\sigma} (\alpha)=0$.  By {\em (i)}, they had not been classified when $i<I(\sigma)$. Since we are at line 11, it has necessarily occurred when $i=I(\sigma)$ at line 9. But the coface $\delta$ of $\sigma$ with minimal index is unique so only one between $\alpha_{r-1}$ and  $\alpha_{r-1}'$ has been classified at line 9, giving a contradiction. Thus, $r=1$.
\end{enumerate}
\end{proof}

\begin{lem}\label{lem:line19}
 Let $\alpha \in L_* (\sigma)$ be such that when it is popped from \texttt{PQone}
at line $15$ of Algorithm~\ref{alg-match}, \texttt{{num}\_{unclass}\_{facets}}$_{\sigma} (\alpha$) = 1. Then the unique cell
$\lambda \in \texttt{{unclass}\_{facets}}_{\sigma} (\alpha)$ belongs to \texttt{PQzero}. Therefore all cells popped out from \texttt{PQone} at line $15$ of Algorithm~\ref{alg-match} for which \texttt{{num}\_{unclass}\_{facets}}$_{\sigma} (\alpha$) = 1 get paired at line 19.
\end{lem}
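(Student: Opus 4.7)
The plan is to proceed in three steps: identify what $\lambda$ must be; show that its own count of unclassified facets has dropped to zero by the moment $\alpha$ is popped; and then argue $\lambda$ has been registered in \texttt{PQzero} by that time. For the first step, since $\alpha$ is popped with $\texttt{num\_unclass\_facets}_\sigma(\alpha) = 1$ and $\sigma$ is already classified at line 10, the unique unclassified facet $\lambda$ of $\alpha$ in $L(\sigma)$ cannot equal $\sigma$. Hence $\lambda \in L_*(\sigma)$, $\dim \lambda \ge \dim \sigma + 1$, and consequently $\dim \alpha \ge \dim \sigma + 2$.

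For the second step, I would show $\texttt{num\_unclass\_facets}_\sigma(\lambda) = 0$ by a codimension-two argument inside $\alpha$. Writing $\lambda = \alpha \setminus \{v_j\}$ with $v_j \in \alpha \setminus \sigma$, any facet $\mu$ of $\lambda$ in $L(\sigma)$ has the form $\mu = \alpha \setminus \{v_j, v_k\}$ with $v_k \in \alpha \setminus \sigma$, $v_k \ne v_j$. When $\dim \alpha = \dim \sigma + 2$ this $\mu$ equals $\sigma$ and is classified directly; otherwise $\lambda' := \alpha \setminus \{v_k\}$ is a second facet of $\alpha$ in $L(\sigma)$, distinct from $\lambda$, which is classified by the hypothesis on $\alpha$, and $\mu$ is a facet of $\lambda'$. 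I would then inspect the four lines at which $\lambda'$ may have been classified---line 10 as $\delta$ (forcing $\mu = \sigma$ by a dimension count), line 19 as the cofacet partner (where $\lambda'$ had count $1$ when popped, the unique unclassified facet then being classified by the pairing), line 19 as the matched facet, or line 26 as critical $\gamma$ (in the latter two cases $\lambda'$ was in \texttt{PQzero} just before and hence had count $0$)---and conclude in every case that $\mu$ was classified at or before $\lambda'$, so $\mu$ is classified when $\alpha$ is popped.

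The heart of the argument is the third step. I would establish by induction on the classification events occurring during the processing of $L(\sigma)$ the invariant that every unclassified $\beta \in L_*(\sigma) \setminus \{\delta\}$ with $\texttt{num\_unclass\_facets}_\sigma(\beta) = 0$ lies in \texttt{PQzero}. The base case holds right after line 11. For maintenance, the crucial structural observation is that in any line-$19$ pairing $\dim \alpha' = \dim \lambda' + 1$, so no $\beta$ can have both $\alpha'$ and $\lambda'$ as facets; hence each classification event lowers the facet count of any $\beta$ by at most one, forcing every count to traverse the value $1$ before reaching $0$. Lines 12, 22, and 28 insert into \texttt{PQone} exactly those cells whose count has just become $1$, and the compatibility of $I$ with the face relation ensures that a cell is popped from \texttt{PQone} strictly before any of its strict cofaces: if by then its count has dropped to $0$ it is transferred to \texttt{PQzero} at line 17, and if its count is still $1$ it gets classified at line 19, which falls outside the scope of the invariant. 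Removals from \texttt{PQzero} occur only at lines 20 and 25 and always accompany classification, so the invariant is preserved. Applying it to $\lambda$ together with the conclusion of step two yields $\lambda \in \texttt{PQzero}$, and the final assertion of the lemma then follows immediately from lines 19--20.

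The main obstacle is the careful handling of the priority-queue ordering in the maintenance step of the invariant: one must verify that a cell whose count drops from $2$ to $1$ is indeed inserted into \texttt{PQone} by one of lines 12, 22, 28 at the very event that triggers the drop, and that the ordering via $I$ guarantees $\lambda$ (which satisfies $I(\lambda) < I(\alpha)$ because it is a proper face of $\alpha$) is transferred to \texttt{PQzero} strictly before $\alpha$ is processed.
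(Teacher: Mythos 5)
Your argument is correct in substance, but it takes a genuinely different route from the paper's. The paper proves the lemma by induction on the codimension $r=\dim\alpha-\dim\sigma$: assuming $\lambda\notin\texttt{PQzero}$, it distinguishes the case where $\lambda$ entered \texttt{PQone} (then, since $I(\lambda)<I(\alpha)$, it was popped earlier and, by the inductive hypothesis, already paired --- contradicting the fact that $\lambda$ is still an unclassified facet of $\alpha$) from the case where $\lambda$ never entered \texttt{PQone}, which is excluded by a descent argument that produces an unclassified $(p+1)$-cell below $\lambda$ and derives a contradiction via Lemma~\ref{lem:correctness1}. You instead establish two independent facts: a codimension-two computation inside $\alpha$ showing that every facet of $\lambda$ in $L(\sigma)$ is already classified at the moment $\alpha$ is popped (because each such facet sits under a second, already classified, facet $\lambda'$ of $\alpha$), and a queue invariant guaranteeing that unclassified cells with no unclassified facets are retrievable from \texttt{PQzero}. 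This decomposition avoids the induction on codimension entirely and is, in my view, easier to verify than the paper's descent step, which is stated rather tersely; your invariant is also exactly the kind of statement that underlies Lemma~\ref{lem:correctness3}, so it is reusable. The price is the bookkeeping in the maintenance step, and one point there needs tightening: as literally stated, your invariant fails during the window in which a cell sits in \texttt{PQone} after its count has already dropped to $0$ (it is only transferred to \texttt{PQzero} at line 17 when popped), so the invariant should read ``lies in \texttt{PQzero} or \texttt{PQone}''. The desired conclusion $\lambda\in\texttt{PQzero}$ then follows because $I(\lambda)<I(\alpha)$ forbids $\lambda$ from still being in the priority queue \texttt{PQone} at the moment $\alpha$ reaches its front. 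You essentially make this point in your closing remarks, but the weaker form of the invariant is what the event-by-event induction actually preserves.
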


\begin{proof}
We reason by induction on $r \geq 2$ where $\dim \alpha = p+r$. Note that for $r=1$, $\alpha$ is a cofacet of $\sigma$ with 0 unclassified faces in $L(\sigma)$ after step 9 is executed. Therefore $\alpha$ cannot enter \texttt{PQone}. For $r=2$, $\lambda$ is a primary facet of $\sigma$ with 0 unclassified faces in $L(\sigma)$ after step 9 is executed. Therefore $\lambda \in \texttt{PQzero}$. Assume by induction that for each natural number $j$ from 2 up to value $r-1$, when $\alpha$ with $\dim \alpha = p+ j$ is popped from \texttt{PQone} with \texttt{{num}\_{unclass}\_{facets}}$_{\sigma} (\alpha$) = 1, its unique unclassified face $\lambda$ belongs to \texttt{PQzero} and therefore $\alpha$ and $\lambda$ get paired at line 19. Let now $j=r$, and  $\dim\alpha=p+j$. Let us assume that $\lambda$ is not in \texttt{PQzero}. Then there are two cases. If $\lambda$ has entered \texttt{PQone}, then it has been processed before $\alpha$. Since $\lambda$ is not in \texttt{PQzero}, by the induction hypothesis, it must have been paired with some cell in \texttt{PQzero}. This is a contradiction to the statement \texttt{{num}\_{unclass}\_{facets}}$_{\sigma} (\alpha$) = 1. If $\lambda$ did not enter \texttt{PQone}, then the number of unclassified faces of $\lambda$ in $L(\sigma)$ is greater than or equal to 1. Thus, since $\lambda$ is of dimension $p+r-1$, there must exist a face $\tau^{(p+r-2)}$ of $\lambda$ of dimension $p+r-2$ in $L_* (\sigma)$ that is not paired and not added to $\sC$. This process can be carried out until we get a $(p+1)$-cell $\tau^{(p+1)}$ in $L_*(\sigma)$ that is not classified by the algorithm. In general, we get sequences in $L(\sigma)$ such that
$$\sigma^{(p)}<\tau^{(p+1)}<\tau^{(p+2)}<\ldots < \tau^{(p+r-2)}<\lambda ^{(p+r-1)} <\alpha^{(p+r+2)}$$ with $\tau^{(p+1)}$ not classified by the algorithm. By Lemma~\ref{lem:correctness1} the number of unclassified faces of $\tau^{(p+1)}$ is 0, implying that $\tau^{(p+1)}$ has entered \texttt{PQzero}. Let us fix the sequence for which $\tau^{(p+2)}$ is of minimal index and has only one unclassified face, hence it has entered \texttt{PQone} before $\alpha$. It exists because $\delta$ has been classified and $\cK$ is a simplicial complex.  We deduce that $\tau^{(p+1)}$  and  $\tau^{(p+2)}$ have been paired, contradicting the assumption that $\tau^{(p+1)}$ has not  been classified by the algorithm. Hence, $\lambda$ should belong to \texttt{PQzero}, which completes the proof.
\end{proof}

\begin{lem}\label{lem:correctness3}
Let $\sigma\in \cK$. Each cell in $L(\sigma)$ is processed exactly once by the algorithm
and it is paired with some other cell or classified as critical. Hence Algorithm~\ref{alg-match} classifies all cells of $\cK$ and always terminates.
\end{lem}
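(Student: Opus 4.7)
The plan is to proceed by strong induction on $i = I(\sigma)$ for $i = 1, \ldots, N$, showing that when the FOR loop reaches index $i$, every cell of $L(\sigma)$ has been processed exactly once and is either paired or classified as critical. The base cases $i \in \{1,2\}$ are covered directly by Lemma~\ref{lem:i=1-2}. Once the induction is complete, the conclusion that Algorithm~\ref{alg-match} classifies all of $\cK$ and terminates follows because every $\alpha \in \cK$ lies in $L(\alpha)$, so $\alpha$ is processed by the time the loop reaches $I(\alpha)$, and $\cK$ is finite.

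For the inductive step, fix $\sigma$ with $I(\sigma) = i > 2$ and assume the claim for all $\beta$ with $I(\beta) < i$. If $\sigma$ is already classified when line 5 is reached for this $i$, then it was classified while processing some $\beta$ with $I(\beta) < i$. Applying Lemma~\ref{lem:inj}(4) to the nonempty intersection $L(\sigma) \cap L(\beta) \ni \sigma$ yields a $\beta' \in \cK$ with $L(\sigma) \cup L(\beta) \subseteq L(\beta')$ and $I(\beta') \le I(\beta) < i$, so the inductive hypothesis applied to $\beta'$ already covers $L(\sigma)$. If $\sigma$ is instead unclassified, Lemma~\ref{lem:line10}(i) gives that every cell of $L(\sigma)$ is unclassified, and the algorithm enters the branch at lines 8--30. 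It remains to show that this branch classifies each cell of $L(\sigma)$ exactly once and then terminates.

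The core step is to maintain, for the outer WHILE at line 13, the invariant that at every check of the loop condition every unclassified $\alpha \in L_*(\sigma)$ with \texttt{{num}\_{unclass}\_{facets}}$_{\sigma}(\alpha) \le 1$ lies in \texttt{PQone} $\cup$ \texttt{PQzero}, and that those with count $0$ lie in \texttt{PQzero}. Lines 11--12 install this invariant right after $\sigma$ and $\delta$ are paired (using Lemma~\ref{lem:correctness1} for cofacets of $\sigma$, and the observation that a cell $\alpha > \delta$ of dimension $\dim\sigma + 2$ has exactly one facet in $L(\sigma)$ besides $\delta$, still unclassified). The invariant is then preserved by every iteration: line 17 migrates a popped cell whose count has dropped to $0$; lines 19--21 classify a pair and scan the cofacets of the two newly classified cells to pick up new transitions to count $1$; and lines 25--27 do the same after marking a cell critical. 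Because a cell's count can decrease only when one of its facets has just been classified, scanning the cofacets of each newly classified cell suffices.

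Granted the invariant, completeness follows by contradiction: suppose some $\alpha \in L_*(\sigma)$ is still unclassified when both queues are empty, and choose such $\alpha$ of minimal dimension. If $\dim\alpha = \dim\sigma + 1$, then $\alpha \ne \delta$ was inserted into \texttt{PQzero} at line 11 and can only leave by being classified, a contradiction. If $\dim\alpha > \dim\sigma + 1$, minimality forces every facet of $\alpha$ in $L(\sigma)$ to be classified, so the count is $0$, and the invariant places $\alpha$ in \texttt{PQzero}, again contradicting emptiness. Termination of the inner process reduces to a potential argument: each iteration of the outer WHILE either decreases the number of unclassified cells in $L(\sigma)$ or moves a cell from \texttt{PQone} to \texttt{PQzero}, and a cell enters \texttt{PQone} at most once per cofacet it has in $L(\sigma)$. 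The main obstacle will be the clean verification of the invariant: one must confirm that the three insertion sites for \texttt{PQone} (lines 12, 21, 27) jointly capture every cofacet whose unclassified-facet count can newly drop to $1$, which hinges on the elementary fact that such drops occur exactly when a facet has just been classified.
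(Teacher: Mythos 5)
Your skeleton matches the paper's: an outer strong induction on $I(\sigma)$ with base cases from Lemma~\ref{lem:i=1-2}, the already-classified case dispatched through Lemma~\ref{lem:inj}(4) (your routing via a common $\beta'$ is in fact slightly more careful than the paper's direct claim that $L(\sigma)\subset L(\beta)$), and Lemma~\ref{lem:line10} to enter the main branch. For the core of the argument you diverge: the paper splits the conclusion into three sub-statements and handles cells with two or more unclassified facets by a second, inner induction on dimension, whereas you run a loop invariant (every unclassified cell with at most one unclassified facet sits in a queue), a minimal-dimension counterexample for completeness, and a potential function for termination. That reorganization is legitimate and arguably cleaner for statements (a) and (b), and your verification that lines 11--12 install the invariant (the count of a $(\dim\sigma+r)$-cell after line 10 is $r$ or $r-1$) is correct.

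There is, however, a concrete gap on the ``exactly once'' half of the lemma. You treat lines 19--21 as unproblematically ``classifying a pair,'' but line 20 removes $\lambda$ from \texttt{PQzero}, and the whole step is only safe if $\lambda$ actually resides in \texttt{PQzero} at that moment. This is precisely the content of the paper's Lemma~\ref{lem:line19}, which you neither cite nor reprove, and your invariant does not deliver it: the invariant constrains cells whose \emph{own} unclassified-facet count is at most one, at outer-loop checkpoints, while $\lambda$ is the unique unclassified facet of the popped $\alpha$ and may a priori have any count, and the pairing happens mid-way through the inner loop. If $\lambda$ were instead sitting in \texttt{PQone} when it is paired, it would later be popped, pushed into \texttt{PQzero} or paired again, and hence classified twice --- exactly the failure the lemma rules out. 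Closing this requires either invoking Lemma~\ref{lem:line19} or arguing it yourself (e.g., via the priority ordering: $I(\lambda)<I(\alpha)$ forces $\lambda$ to be popped from \texttt{PQone} before $\alpha$, and a popped unclassified cell with count zero lands in \texttt{PQzero}; one must still rule out $\lambda$ never having entered either queue). As written, your proof establishes completeness and termination but not single classification.
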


\begin{proof} We break the conclusion into three statements:
\begin{aenum}
\item Each cell in the lower star eventually enters PQone or PQzero.
\item Each cell that has entered PQone or PQzero is eventually classified.
\item A cell that has already been classified cannot enter PQone or PQzero again.
\end{aenum}

We simultaneously prove the three statements by induction on $i=I(\sigma)$. For $i=1,2$ the claim is proved by Lemma~\ref{lem:i=1-2}. Let us now assume by induction that the claim is true from 2 up to $i-1$. Let $I(\sigma)=i$. If \texttt{classified($\sigma$)}={\bf true}, then $\sigma$ has already been classified as part of $L(\beta)$ for some cell $\beta$ that is a face of $\sigma$. Thus, $I(\beta) < I(\sigma) = i$ and $L(\sigma) \subset L(\beta)$. By induction hypothesis, every cell of $L(\sigma)$ is processed once by the algorithm
and it is paired with some other cell or classified as critical. If \texttt{classified($\sigma$)}={\bf false}, $\sigma$ is  either declared critical at line 7 or, by Lemma~\ref{lem:line10}{\em (ii)}, paired with some other cell $\delta$ in $L_* (\sigma)$ at line 10. The cells $\sigma$ and $\delta$ are no further processed.

Let $\gamma$ be a cell left in $L_*(\sigma)$, if any. Suppose that
\begin{equation}\label{eq:dimleq1}
\texttt{{num}\_{unclass}\_{facets}}_{\sigma}(\gamma)\leq 1.
\end{equation}
Then $\gamma$ is either added to \texttt{PQzero} or to \texttt{PQone} and it is ultimately either paired or classified as critical. More precisely, if $\gamma$ is added to \texttt{PQone}, then it is either moved to \texttt{PQzero} at line 17, or paired at line 19 by Lemma~\ref{lem:line19}. If $\gamma$ is added to \texttt{PQzero}, it is either paired at line 20 or declared critical at line 26. This also shows that, when $i=I(\sigma)$, every cell in $L_* (\sigma)$ enters at most once in \texttt{PQzero} and \texttt{PQone}.

It remains to show that (a) also holds for cells $\gamma$ with
\begin{equation}\label{eq:dimgeq2}
\texttt{{num}\_{unclass}\_{facets}}_{\sigma}(\gamma)\geq 2.
\end{equation}
We prove (a) by induction on the dimension $n$ of cells in $L_*(\sigma)$. The initial step is $\dim\gamma=\dim\sigma+1$. But then $\gamma$ is a cofacet of $\sigma$ and, by Lemma~\ref{lem:correctness1}, (\ref{eq:dimleq1}) holds. Assume by induction that all cells of $L_*(\sigma)$ with dimension smaller than $n$ have entered either \texttt{PQzero} or \texttt{PQone}.

Let $\gamma$ be a cell of dimension $n$ in $L_* (\sigma)$. If (\ref{eq:dimleq1}) holds, we are done. Suppose that (\ref{eq:dimgeq2}) holds. We show that $\gamma$ eventually enters \texttt{PQone}. By induction, all faces of $\gamma$ eventually enter \texttt{PQzero} or \texttt{PQone}. We have earlier shown that those which enter \texttt{PQone} are classified or moved to \texttt{PQzero}. So all faces of $\gamma$ which are not classified enter \texttt{PQzero}. All such faces which have a coface in \texttt{PQone} or get a coface in \texttt{PQone} at line 21 are classified at line 19-20. We remain with the faces of $\gamma$ which are in \texttt{PQzero} but have no coface in \texttt{PQone}. Let $r$ be the number of such faces. Necessarily $r>1$, otherwise $\gamma$ is in \texttt{PQone}. At lines 25-26 one of those faces is classified as critical, so we remain with $r-1$ such faces.  After passing other $r-2$ times through lines 25-26, $\gamma$ remains with only one unclassified face and it is added to \texttt{PQone} at line 27.

So we have proved that every cell in $L(\sigma)$ is processed exactly once by the algorithm while $i=I(\sigma)$ and it is paired with some other cell of $L(\sigma)$ or classified as critical.

Finally, since the number of cells in the complex $\cK$ is finite and the union of $L (\sigma)$'s covers the complex, the proof is complete.
\end{proof}

\begin{prop}\label{lem:partition}
 $\sA, \sB, \sC$ is a partition of the complex ${\cK}$ and $\ma$ is a bijective function from $\sA$ to $\sB$.
Moreover, if $\sigma\in \cK^\alpha \cap \sA$ then $\ma(\sigma)\in \cK^\alpha$.
\end{prop}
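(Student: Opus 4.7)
The plan is to split the statement into three claims: (i) $\sA,\sB,\sC$ are pairwise disjoint and cover $\cK$, (ii) $\ma$ is a well-defined bijection from $\sA$ to $\sB$, and (iii) the filtration condition $\ma(\sigma)\in \cK^\alpha$ whenever $\sigma\in \cK^\alpha\cap \sA$.

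For (i), I would argue coverage and disjointness separately. Coverage follows immediately from Lemma~\ref{lem:correctness3}: every cell of $\cK$ is eventually processed and either paired (so inserted into $\sA$ or $\sB$) or declared critical (inserted into $\sC$). For disjointness, I would inspect all places where the algorithm writes to $\sA$, $\sB$, or $\sC$ (lines 7, 9, 19, 26) and note that in each case the cell in question has \texttt{classified}$(\cdot)$=\FALSE immediately before and \texttt{classified}$(\cdot)$=\TRUE immediately after. Combined with Lemma~\ref{lem:correctness3}(c) (a classified cell never re-enters \texttt{PQone} or \texttt{PQzero}) and the guards at lines 5 and 16, this prevents any cell from being inserted twice.

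For (ii), $\ma$ is assigned values only at lines 9 and 19, and in each assignment a fresh unclassified pair $(\sigma,\delta)$ or $(\lambda,\alpha)$ is simultaneously placed into $\sA$ and $\sB$ respectively. Hence $\ma$ is a well-defined function $\sA\to\sB$. Injectivity is immediate because a cell enters $\sA$ at most once (by (i)), and every $\ma$-value is defined exactly once with a distinct domain element. Surjectivity holds because every cell inserted into $\sB$ is so inserted precisely in conjunction with the corresponding $\ma$-assignment, so it lies in the image of $\ma$. That $\ma(\sigma)$ is always a cofacet of $\sigma$ follows from the definitions at lines 9 (where $\delta$ is explicitly a cofacet in $L_*(\sigma)$) and 19 (where $\lambda$ is an unclassified facet of $\alpha$).

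For (iii), the key observation is that whenever a pair $(\sigma,\ma(\sigma))$ is created during the outer iteration with index $i=I(\sigma')$, both $\sigma$ and $\ma(\sigma)$ lie in $L(\sigma')$. This is direct at line 9 (take $\sigma'=\sigma$, so $\ma(\sigma)=\delta\in L_*(\sigma)\subseteq L(\sigma)$), and at line 19 both $\lambda$ and $\alpha$ belong to $L(\sigma')$ by construction of the queues. Lemma~\ref{lem:inj}(1) then gives $f(\sigma)=f(\sigma')=f(\ma(\sigma))$. Since $\cK^\alpha=\{\tau\in \cK\mid f(\tau)\preceq \alpha\}$ by the formula~(\ref{max}) (as $f(\tau)$ is componentwise the maximum over the vertices of $\tau$), the hypothesis $\sigma\in \cK^\alpha$ gives $f(\ma(\sigma))=f(\sigma)\preceq \alpha$, hence $\ma(\sigma)\in \cK^\alpha$.

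I expect no single step to be hard: almost all the combinatorial content is already packaged in Lemma~\ref{lem:correctness3} and Lemma~\ref{lem:inj}. The only mild subtlety is being careful at line 19 to identify the correct outer-loop cell $\sigma'$ so that Lemma~\ref{lem:inj}(1) can be applied to conclude $f(\lambda)=f(\alpha)$; once this is in place, the filtration-preservation claim is a one-line consequence.
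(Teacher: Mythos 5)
Your proposal is correct and follows the same overall decomposition as the paper: coverage of $\cK$ from Lemma~\ref{lem:correctness3}, pairwise disjointness, bijectivity of $\ma$, and the filtration condition via Lemma~\ref{lem:inj}(1) together with the observation that $\cK^\alpha=\{\tau\in\cK\mid f(\tau)\preceq\alpha\}$; that last part is essentially verbatim the paper's argument. The one place where you genuinely diverge is disjointness and injectivity: the paper argues by contradiction that a doubly-inserted cell would lie in two lower stars, invokes Lemma~\ref{lem:inj}(4) to place everything inside a single lower star $L(\beta)$, and then contradicts the ``processed exactly once'' clause of Lemma~\ref{lem:correctness3}; you instead appeal directly to the \texttt{classified} flag, arguing that every insertion into $\sA$, $\sB$, or $\sC$ targets a currently unclassified cell and immediately classifies it. This is legitimate, but note that for lines 9--10 the pseudocode does not itself guard the insertion of $\delta$ with a \texttt{classified} check --- the fact that $\delta$ is still unclassified there is the content of Lemma~\ref{lem:line10}(ii), which in turn rests on Lemma~\ref{lem:inj}(4); so your route does not actually avoid that lemma, it just hides it inside the invariant you cite. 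Provided you make that dependence explicit, your argument is complete and arguably cleaner, since it handles $\sA\cap\sB$, $\sA\cap\sC$, $\sB\cap\sC$, and injectivity of $\ma$ uniformly, where the paper treats them case by case.
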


\begin{proof}
By Lemma \ref{lem:correctness3},  $\sA \cup \sB \cup \sC= {\cK}$. We show that $\sA \cap \sB = \emptyset$.
This statement is trivial for vertices since they cannot belong to $\sB$. Assume on the contrary that there exists a cell $\alpha^{(p)}$ with $p \geq 1$ such that
$\alpha \in \sA \cap \sB$. Thus, there exist cells $\delta^{(p-1)} < \alpha < \gamma^{(p+1)}$ such that
$\ma (\alpha) = \gamma $ and $\ma (\delta) = \alpha$. This means that $\alpha$ is paired twice by processing two different lower stars $L(\sigma_1)$ and $L(\sigma_2)$.
By Lemma~\ref{lem:inj}(4), there exists a cell $\beta$ such that the cells $\delta, \alpha$ and $\gamma$ are all processed within $L(\beta)$. Thus $\alpha$ is
processed twice within $L(\beta)$, which contradicts Lemma~\ref{lem:correctness3}.  If we assume that $\sA \cap \sC \neq \emptyset$
and contains a cell $\alpha$, then $\alpha$ has been declared critical either at line 7 or at line 26. In the first case,  $\alpha$ was not previously assigned to $\sA$
because of line 5; on the other hand it cannot be assigned to $\sA$ later because of Lemma~\ref{lem:correctness3}. In the second case,  when $\alpha$ is added to $\sC$,
it comes from $\texttt{PQzero}$ and $\texttt{PQone}$ is empty. The only cells that may enter $\texttt{PQzero}$ or $\texttt{PQone}$ later are cofaces of $\alpha$ (see line 27).
Therefore $\alpha$ cannot be added again to $\texttt{PQzero}$, and as a consequence it cannot be added to $\sA$. The proof that   $ \sB \cap \sC = \emptyset$ can be handled in
much the same way. It follows that $\sA, \sB, \sC$ is a partition of $\cK$.

By construction, the map $\ma$ is onto. We will show that $\ma$ is injective. If two cells $\sigma_1$ and $\sigma_2$ are paired with the same cell $\alpha$, it follows that $\alpha$
must belong to the intersection of two lower stars. Therefore, again by Lemma~\ref{lem:inj}(4), there must exist a $\beta$ such that $\alpha$ is processed twice by the algorithm within
$L(\beta)$ which is again a
contradiction to Lemma~\ref{lem:correctness3}. Thus $\ma$ is bijective.

By construction, $\sigma$ is a face of $\ma (\sigma)$ and they both belong to some $L(\beta)$. Thus, by Lemma~\ref{lem:inj}(1),
$f(\sigma) = f(\ma(\sigma))$, and therefore if $\sigma \in \cK^\alpha \cap \sA$ then $\ma(\sigma)\in \cK^\alpha$.

\end{proof}

\begin{thm}\label{th:acyclicity}
Algorithm~\ref{alg-match} produces a partial matching $(\sA, \sB, \sC, \ma)$ that is acyclic.
\end{thm}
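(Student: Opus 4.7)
I argue by contradiction: assume a closed $\ma$-path $\sigma_0,\tau_0,\sigma_1,\tau_1,\ldots,\sigma_p,\tau_p,\sigma_{p+1}=\sigma_0$ exists, with $\tau_i=\ma(\sigma_i)$ and $\sigma_{i+1}$ a facet of $\tau_i$ different from $\sigma_i$.

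\textbf{Step 1: $f$ is constant on the cycle.} By Lemma~\ref{lem:inj}(1), $f(\tau_i)=f(\sigma_i)$; and by the defining formula~(\ref{max}), $\sigma_{i+1}\subset\tau_i$ gives $f(\sigma_{i+1})\preceq f(\tau_i)$. Hence $f$ is $\preceq$-monotone non-increasing along the path, and closure forces $f(\sigma_i)=f(\tau_i)=c$ for a fixed $c\in\R^k$ and all $i$.

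\textbf{Step 2: the entire cycle is produced during the processing of a single lower star.} For each $i$, let $\beta_i\in\cK$ be the unique cell such that the pair $(\sigma_i,\tau_i)$ is created during the iteration of the outer loop in which $\sigma=\beta_i$ at line~4; by Lemma~\ref{lem:correctness3} this is well defined, and $\sigma_i,\tau_i\in L(\beta_i)$, so $\beta_i\subseteq\sigma_i$ and $f(\beta_i)=c$. I will show $\beta_{i+1}=\beta_i$. Since $f(\sigma_{i+1})=c=f(\beta_i)$, Lemma~\ref{lem:inj}(3) yields a common face $\alpha\subseteq\sigma_{i+1}\cap\beta_i$ with $f(\alpha)=c$. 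If $I(\alpha)<I(\beta_i)$, then $\sigma_i,\tau_i\in L(\alpha)$ (since $\alpha\subseteq\beta_i\subseteq\sigma_i,\tau_i$ and all share $f$-value $c$), so by Lemma~\ref{lem:correctness3} they would be classified in the earlier iteration indexed by $\alpha$, contradicting the definition of $\beta_i$. Hence $\alpha=\beta_i$, so $\beta_i\subseteq\sigma_{i+1}$, i.e.\ $\sigma_{i+1}\in L(\beta_i)$. Again by Lemma~\ref{lem:correctness3}, the pair containing $\sigma_{i+1}$ is then created in the same iteration: $\beta_{i+1}=\beta_i$. Iterating, all pairs of the cycle are created during processing of a single lower star $L(\beta)$.

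\textbf{Step 3: no cycle is possible within one lower star.} I distinguish the two types of pairs made during an iteration of $L(\beta)$. The pair $(\beta,\delta)$ formed at line~10 cannot appear in the cycle: any facet of $\delta$ in $L(\beta)$ has the same dimension as $\beta$ and contains $\beta$, hence equals $\beta$, so no $\sigma_{i+1}\neq\beta$ can continue the path from $\delta$. For a pair $(\lambda,\alpha)$ formed at line~19, at the moment of formation $\alpha$ has $\lambda$ as its unique unclassified facet in $L(\beta)$ (by the condition tested at line~16 and Lemma~\ref{lem:correctness1}/\ref{lem:line19}); therefore any $\sigma_{i+1}\neq\lambda$ that is a facet of $\alpha=\tau_i$ in $L(\beta)$ must already be classified. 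For the path to continue, $\sigma_{i+1}\in\sA$, and its pair $(\sigma_{i+1},\tau_{i+1})$ was created strictly before $(\sigma_i,\tau_i)$. Going once around the closed cycle then produces a strictly decreasing finite chain of creation times that returns to its starting value, a contradiction.

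\textbf{Main obstacle.} The delicate step is Step~2, controlling that a closed cycle cannot jump between distinct lower-star iterations. It relies on combining Lemma~\ref{lem:inj}(3) (existence of a common face of equal $f$-value) with the strict monotonicity of $I$ with respect to face inclusion and with the uniqueness of the iteration in which a given cell is classified (Lemma~\ref{lem:correctness3}); once the cycle is localized to a single $L(\beta)$, Step~3 is a short creation-time argument.
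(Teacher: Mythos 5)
Your proof is correct and follows essentially the same strategy as the paper's: show $f$ is constant on the cycle, localize the whole cycle to a single lower star, and then derive a contradiction from the order in which the pairings are created (with the ``root'' pair made at line~10 excluded separately, which plays the role of the paper's case $\bar\sigma=\sigma_j$). The only difference is bookkeeping: the paper localizes via the minimal-index common face $\bar\sigma$ of all cells in the loop obtained from Lemma~\ref{lem:inj}(3), while you track the iteration cell $\beta_i$ of each pair and prove $\beta_{i+1}=\beta_i$ inductively --- arguably a slightly more explicit justification of the same localization step.
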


\begin{proof}
A partial matching is acyclic if and only if there are no nontrivial closed $\ma$--paths as defined in (\ref{eq:m-path}). We prove this by contradiction. Assume that
\begin{equation}\label{eq:directed-loop}
\sigma_0 \xrightarrow{\ma}  \tau_0 \xrightarrow{>} \sigma_1 \xrightarrow{\ma} \tau_1 \xrightarrow{>} \ldots \xrightarrow{>} \sigma_n \xrightarrow{\ma}
\tau_n \xrightarrow{>} \sigma_0
\end{equation}
is a directed loop in the modified Hasse diagram. In particular, all $\sigma_i$ in the loop have the same dimension, say, $p$ and all $\tau_i$ have the same dimension $p+1$.
The index $i$ of $\sigma_i$ is not the value of the indexing function $I$ but it simply displays its position in the loop.
From Lemma~\ref{lem:inj}(1), it follows that
\begin{equation}\label{eq:directed-loop-f}
f(\sigma_0) =  f(\tau_0)  \succeq  f(\sigma_1) = f(\tau_1) \succeq  \ldots \succeq  f(\sigma_n) = f( \tau_n) \succeq f(\sigma_0).
\end{equation}
If any of the inequalities $f(\tau_{i-1})  \succeq  f(\sigma_i)$ is strict, then there exists a coordinate $j$ such that
$f_j(\tau_{i-1}) > f_j(\sigma_i)$ and so $f_j(\sigma_0) > f_j(\sigma_0)$, a contradiction. Hence $f$ is constant on all the elements of the loop.
Let us set $\bar \sigma$ equal to the cell such that $$I(\bar\sigma) =\min\left\{I(\alpha)\in \N: \alpha\subseteq \bigcap_{i=0}^n\sigma_i\cap \bigcap_{i=0}^n\tau_i\right\}.$$
The simplex $\bar\sigma$ exists  by Lemma~\ref{lem:inj}(3). This implies that $\sigma_i$ and $\tau_i $ belong to $L(\bar \sigma)$ for $i=0,\ldots, n$. Now we have two cases: either
$\bar\sigma=\sigma_j$ for some $j$, $0\le j\le n$, or $\bar \sigma \subset \sigma_j$ for every $j$. In the first case, without a loss of generality, we may assume that $\bar\sigma=\sigma_0$.  Since $\sigma_n$ has the same dimension as $\sigma_0$, it is  in $L(\sigma_0)$ if and only if $\sigma_n=\sigma_0$, implying that the loop is trivial, a contradiction.
In the second case, note that Algorithm~\ref{alg-match} produces a pairing $\ma(\sigma_i)=\tau_i$ only when \texttt{{num}\_{unclass}\_{facets}}$_{\bar \sigma} (\tau_i)$ = 1, and in that case the unclassified face of $\tau_i$ is exactly $\sigma_i$. Therefore, we have that $\sigma_0$ is paired to $\tau_0$ after that $\sigma_1$,  also a face of $\tau_0$, has been paired to $\tau_1$.

Iterating this argument for $i=1,\ldots, n$, we deduce that $\sigma_0$ is paired to $\tau_0$ after that $\sigma_n$ has been paired to $\tau_n$. But since $\sigma _0$ is also a face of $\tau_n$, and $\sigma_0$ is still unclassified when $\sigma_n$ is paired to $\tau_n$, it follows that $\sigma_n=\sigma_0$, implying that the loop is trivial, again a contradiction.

\end{proof}

\subsection{Complexity analysis}
\label{sec:complexity}

{\bf Definitions and parameters.} We use the following definitions and parameters in estimating the computational cost of Algorithm~\ref{alg-match}.
\begin{enumerate}
 \item Given a simplex $\sigma \in \cK$, the coboundary cells of $\sigma$ are given by
$$\cb (\sigma) := \{ \tau \in \cK \, |\, \sigma \, \mbox{is a face of } \,  \tau\}.$$
It is immediate from the definitions that $L_* (\sigma) \subset \cb (\sigma)$.

\item We define the coboundary mass $\gamma$ of $\cK$ as
$$\gamma = \max_{\sigma \in \cK} \card \cb(\sigma),$$
where $\card$ denotes cardinality. While $\gamma$ is trivially bounded by $N$, the number of cells in $\cK$,
this upper bound is a gross estimate of $\gamma$ for many complexes of manifolds and approximating surface boundaries of objects.

\item For the simplicial complex $\cK$, we assume that the boundary and coboundary cells of each simplex
are computed offline and stored in such a way that access to every cell is done in constant time.

\item Given an admissible function $f : {\cK} \to \R^k$, the values by $f$ of simplices $\sigma \in \cK$ are stored in the structure
that stores the complex $\cK$ in such a away that they are accessed in constant time.

\item We assume that adding cells to the lists $\sA$, $\sB$, and $\sC$ is done in constant time.
\end{enumerate}

Algorithm~\ref{alg-match} processes every cell $\sigma$ of the simplicial complex $\cK$ and checks whether it is classified or not.
In the latter case, the algorithm requires a function that returns the cells in the reduced lower star $L_* (\sigma)$ which
is read directly from the structure storing the complex. In the best case, $L_* (\sigma)$ is empty and the cell is declared critical.
Since $L_* (\sigma) \subset \cb (\sigma)$, it follows that $\card L_* (\sigma) \leq \gamma$.
As stated earlier in proof of Lemma~\ref{lem:correctness3} every cell in $L_* (\sigma)$ enters at most once in \texttt{PQzero} and \texttt{PQone}.
It follows that the while loops in the algorithm are executed all together in at most $2 \gamma$ steps. We may consider the operations such as finding
the number of unclassified faces of a cell to have constant time except for the priority queue operations which are logarithmic in the size of the priority queue
when implemented using heaps. Since the sizes of \texttt{PQzero} and \texttt{PQone} are clearly bounded by $\gamma$, it follows that $L_* (\sigma)$ is
processed in at most $O(\gamma \log \gamma)$ steps.
Therefore processing the whole complex incurs a worst case cost of $O(N \cdot \gamma \log \gamma)$.

%
%

\section{Persistent Homology Reduction}
\label{per-hom}

\subsection{Lefschetz complexes}

For the purpose of the Matching Algorithm~\ref{alg-match}, we only needed simplical complexes but, for its applications to computing persistent homology, one shall need a more general class of cellular complexes. A convenient combinatorial framework for a cellular complex is that of a Lefschetz complex introduced by Lefschetz in \cite{Lef42} under the name {\em complex}. The properties of Lefchetz complexes are developed further with the purpose of efficient homology computation in \cite{MrBa09} under the name $\sS$-complex and we refer to this term in our first paper \cite{AlKaLa17}.

A {\em  Lefschetz complex}, equivalently, {\em $\sS$-complex} is a graded set $\sS=\{\sS_q\}_{q\in\Z}$ of elements which we shall call {\em cells} with a {\em facet relation} $\tau < \sigma$ and {\em incidence numbers} $\kappa(\sigma, \tau)$ which are zero unless $\tau$ is a facet of $\sigma$. The incidence numbers are defined so that they give rise to a chain boundary map $\bdy^\kappa_q:C_q(\sS) \to C_{q-1}(\sS)$, thus defining a free chain complex $C_*(\sS, \bdy^\kappa)$. It is assumed in \cite{MrBa09} that the chain coefficients are in a principal ideal domain $R$ but in our paper we assume that $R=F$ is a field. A simplicial complex $\cK$ is a particular case of a Lefschetz complex. For the incidence relations, one needs to impose an orientation on simplices of $\cK$ unless we compute the homology with $\Z_2$ coefficients.

By the homology of a Lefschetz complex $(\sS, \kappa)$ we mean the homology of the chain complex $(C_*(\sS),\partial^\kappa_* )$, and we denote it by $H_*(\sS, \kappa)$ or simply by $H_*(\sS)$.

\subsection{Multifiltration on Lefschetz complexes}\label{sec:S-m-filt}

The concept of {\em sublevel set filtration} of $\cK$ induced by $f: \cK \to \R^k$ introduced in Section~\ref{sec:md-f} naturally extends to Lefschetz complexes as follows.

Let $(\sS, \kappa)$ be a Lefschetz complex. A {\em multi-filtration} of $\sS$ is a family $\cF=\{\sS^\alpha\}_{\alpha\in \R^k}$ of
subsets of $\sS$ with the following properties:
\begin{aenum}
\item $\cF$ is nested with respect to inclusions, that is $\sS^\alpha\subseteq \sS^{\beta}$, for every $\alpha\preceq \beta$,
where $\alpha\preceq \beta$ if and only if $\alpha_i \leq \beta_i$ for all $i=1,2,\ldots, k$;
\item $\cF$ is non-increasing on faces, that is, if $\sigma \in \sS^\alpha$ and $\tau$ is a face of $\sigma$ then $\tau \in \sS^\alpha$.
\end{aenum}

Persistence is based on analyzing the homological changes occurring along the filtration as $\alpha$ varies. This analysis is carried out by considering, for $\alpha\preceq\beta$, the homomorphism
\[
H_*(j^{(\alpha,\beta)}): H_*(\sS^{\alpha}) \to H_*(\sS^{\beta}).
\]
induced by the inclusion map $j^{(\alpha,\beta)}:\sS^{\alpha}\hookrightarrow \sS^{\beta}$.

The image of the map $H_q(j^{(\alpha,\beta)})$ is  known as the {\em $q$'th multidimensional persistent homology group} of the filtration at $(\alpha,\beta)$ and we denote it by $H_q^{\alpha,\beta}(\sS)$. It contains the homology classes of order $q$ born not later than $\alpha$ and still alive at $\beta$.

\subsection{Reductions}\label{sec:reductions}

The definition of a partial matching given in Section~\ref{sec:match} extends in a straightforward way to any Lefschetz complex $(\sS, \kappa)$~\cite{AlKaLa17}. The only substantial difference is that the condition ``for each $\tau\in \sB$, $\ma(\tau)$ is a cofacet of $\tau$'' needs to be replaced by the condition ``for each $\tau\in \sB$, $\kappa(\ma(\tau),\tau)$ is invertible'', unless field coefficients are assumed.

Let  $(\sA,\sB,\sC,\ma)$ be a partial matching (not necessarily acyclic) on a Lefschetz complex $(\sS, \kappa)$. Given $\sigma\in \sA$, a new Lefschetz complex $(\overline{\sS}, \overline{\kappa})$ is constructed in \cite{AlKaLa17} by setting $\overline{\sS}=\sS\setminus \{\ma(\sigma),\sigma\}$, and $\overline{\kappa}:\overline{\sS}\times \overline{\sS}\to R$,
\begin{equation}\label{eq:kappa-bar}
\overline{\kappa}(\eta,\xi)=\kappa(\eta,\xi)-\frac{\kappa(\eta,\sigma)\kappa(\ma(\sigma),\xi)}{\kappa(\ma(\sigma),\sigma)}.
\end{equation}
We say that $(\overline{\sS}, \overline{\kappa})$ is obtained from $(\sS,\kappa)$ by a {\em reduction} of the pair $(\ma(\sigma),\sigma)$.

It is well known \cite{KaMrSl98} that $C_*(\overline{\sS})$ is a well-defined chain complex and that there exists a pair of linear maps $\pi: C_*(\sS)\to C_*(\overline{\sS})$ and $\iota: C_*(\overline{\sS})\to C_*(\sS)$, explicitly defined by \cite[Formulas (3,4)]{AlKaLa17}, which are chain equivalences. As a consequence,
\begin{equation}\label{eq:H-S=Sbar}
H_*(\overline{\sS})\cong H_*(\sS).
\end{equation}

Let  $(\sA,\sB,\sC,\ma)$ be an acyclic partial matching  on a Lefschetz complex $(\sS, \kappa)$. Let $(\overline{\sS}, \overline{\kappa})$ be obtained from $(\sS,\kappa)$ by reduction of the pair $(\ma(\sigma),\sigma)$, $\sigma\in \sA$. It is proved in \cite{AlKaLa17} that $\overline{\kappa}(\ma(\tau),\tau)=\kappa(\ma(\tau),\tau)$ for any $\tau\in \sA\setminus \{\sigma\}$.

The Algorithm~\ref{alg-match} takes a simplicial complex $\cK$ as the input. Let us set
\[
\sS(0)=\cK
\]
as the initial complex. If we apply a reduction to $\sS(0)$, we may get a Lefchetz complex which is no longer a simplicial complex. However, as a set of generators of a chain complex, it is the subset of the original one: it consists of simplices of $\cK$ and only the incidence numbers get changed. This fact is used to prove in \cite{AlKaLa17} that one-step reductions can be iterated on new subcomplexes. We do not iterate the matching algorithm: it is applied only once to the initial simplicial complex $\cK$. We iterate reductions of pairs initially matched by the algorithm. Here are the steps towards the main result on persistent homology of $\sS$ proved in \cite{AlKaLa17}. All the following statements equally apply to the matching produced by the new Algorithm~\ref{alg-match}.

\begin{prop}\label{prop:reduced-matching}
Let  $(\sA,\sB,\sC,\ma)$ be an acyclic partial matching on $(\sS,\kappa)$. Given a fixed $\sigma\in \sA$,
define $\overline{\sA}=\sA\setminus \{\sigma\}$,
$\overline{\sB}=\sB\setminus \{\ma(\sigma)\}$, $\overline{\ma}=\ma_{|\overline{\sA}}$, and $\overline{\sC}=\sC$.
Then $(\overline{\sC},\overline{\ma}:\overline{\sA}\to \overline{\sB})$ is an acyclic partial matching on $(\overline{\sS},\overline{\kappa})$.
\end{prop}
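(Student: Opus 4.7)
The plan is to split the verification into two pieces: that $(\overline{\sA},\overline{\sB},\overline{\sC},\overline{\ma})$ is a partial matching on $(\overline{\sS},\overline{\kappa})$, and then that it is acyclic.

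The partition-and-bijection piece is essentially bookkeeping. Since $(\sA,\sB,\sC)$ partitions $\sS$ and $\overline{\sS}=\sS\setminus\{\sigma,\ma(\sigma)\}$ is obtained by removing exactly the matched pair with $\sigma\in\sA$ and $\ma(\sigma)\in\sB$, the sets $\overline{\sA}$, $\overline{\sB}$, $\overline{\sC}$ partition $\overline{\sS}$, and $\overline{\ma}=\ma_{|\overline{\sA}}$ is a bijection onto $\overline{\sB}$. The only non-automatic point is that $\overline{\kappa}(\overline{\ma}(\tau),\tau)$ is invertible for every $\tau\in\overline{\sA}$, which follows at once from the identity $\overline{\kappa}(\ma(\tau),\tau)=\kappa(\ma(\tau),\tau)$ for $\tau\in\sA\setminus\{\sigma\}$ quoted just before the statement.

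The acyclicity piece is the main content. I would argue by contradiction: assume a closed $\overline{\ma}$-path
\[
\sigma_0\xrightarrow{\overline{\ma}}\tau_0\xrightarrow{>}\sigma_1\xrightarrow{\overline{\ma}}\tau_1\xrightarrow{>}\cdots\xrightarrow{\overline{\ma}}\tau_p\xrightarrow{>}\sigma_0
\]
exists in $(\overline{\sS},\overline{\kappa})$, and lift it to a closed $\ma$-path in $(\sS,\kappa)$. Formula (\ref{eq:kappa-bar}) shows that the reduced cofacet condition $\overline{\kappa}(\tau_i,\sigma_{i+1})\ne 0$ forces at least one of $\kappa(\tau_i,\sigma_{i+1})\ne 0$ or $\kappa(\tau_i,\sigma)\kappa(\ma(\sigma),\sigma_{i+1})\ne 0$. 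In the first case I keep the step unchanged; in the second, I replace it by the detour $\tau_i\xrightarrow{>}\sigma\xrightarrow{\ma}\ma(\sigma)\xrightarrow{>}\sigma_{i+1}$, which is a well-formed two-segment piece of an $\ma$-path in $(\sS,\kappa)$ (dimensions automatically agree, since $\kappa(\tau_i,\sigma)\ne 0$ forces $\dim\sigma=\dim\sigma_i$). Because $\overline{\ma}=\ma_{|\overline{\sA}}$, every matching arrow appearing in the lifted walk is a genuine $\ma$-arrow in the original complex.

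The final check, which I expect to be the main delicate point, is the distinctness condition $\sigma_{j+1}\ne\sigma_j$ inherent in the definition of an $\ma$-path. Along unchanged portions it is inherited from the reduced path. Along an inserted detour the new interior cell is $\sigma$ itself, and since each $\sigma_i$ lies in $\overline{\sA}=\sA\setminus\{\sigma\}$ we have $\sigma\ne\sigma_i$ and $\sigma\ne\sigma_{i+1}$, so no detour introduces a repetition or collapses into a trivial segment. The lifted sequence is therefore a genuine nontrivial closed $\ma$-path in $(\sS,\kappa)$, contradicting the assumed acyclicity of $(\sA,\sB,\sC,\ma)$ and finishing the proof.
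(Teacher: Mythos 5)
The paper does not actually prove this proposition: it is one of the statements imported verbatim from \cite{AlKaLa17} (``Here are the steps towards the main result \dots\ proved in \cite{AlKaLa17}''), so there is no in-paper proof to compare against. Your argument is correct and supplies a self-contained proof along the standard lines. The bookkeeping half is fine, and you correctly identify that the only non-trivial point there is the invertibility of $\overline{\kappa}(\overline{\ma}(\tau),\tau)$, which you legitimately obtain from the identity $\overline{\kappa}(\ma(\tau),\tau)=\kappa(\ma(\tau),\tau)$ for $\tau\in\sA\setminus\{\sigma\}$ stated just before the proposition. The acyclicity half is the expected path-lifting argument: formula (\ref{eq:kappa-bar}) forces $\kappa(\tau_i,\sigma_{i+1})\ne 0$ or $\kappa(\tau_i,\sigma)\kappa(\ma(\sigma),\sigma_{i+1})\ne 0$, and in the latter case the detour $\tau_i \xrightarrow{>} \sigma \xrightarrow{\ma} \ma(\sigma) \xrightarrow{>} \sigma_{i+1}$ is a legitimate segment of an $\ma$--path, with the gradings matching as you note. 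Your check of the consecutive-distinctness condition is the right one and is correctly discharged by the observation that every $\sigma_i$ in the reduced loop lies in $\overline{\sA}=\sA\setminus\{\sigma\}$, so the inserted cell $\sigma$ never coincides with an adjacent $\sigma_i$; since the definition of an $\ma$--path only forbids equality of consecutive $\sigma$'s, possible revisits of $\sigma$ by two different detours are harmless. The lifted walk is therefore a nontrivial closed $\ma$--path in $(\sS,\kappa)$, giving the desired contradiction. In short: no gap, and your write-up has the added value of making explicit an argument the paper leaves to an external reference.
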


Let ${\cF}=\{\sS^\alpha\}_{\alpha\in\R^k}$ be a multifiltration on $\sS$. Then $\overline{\cF}=\{\overline{\sS}^\alpha\}_{\alpha\in\R^k}$ is the {\em induced multifiltration} on
$\overline{\sS}$ defined by setting, for each $\tau \in \overline{\sS}$,
\[
\tau \in \overline{\sS}^\alpha \iff \tau\in \sS^\alpha.
\]

In the sequel, we assume that $(\sA,\sB,\sC,\ma)$ is an acyclic matching on a filtered Lefschetz complex $\sS$ with the property:

\begin{equation}\label{matching-filtration}
\mbox{If } \sigma\in\sS^\alpha \mbox{ then } \ma(\sigma)\in \sS^\alpha.
\end{equation}

Lemma~\ref{lem:partition} asserts that the matching produced by Algorithm~\ref{alg-match} on a filtered simplicial complex $\sS(0)=\cK$ has this property.

\begin{lem}\label{lem:chain-equivalence} Let $\sigma\in \sA$ and let $(\overline{\sS}, \overline{\kappa})$ be obtained from $(\sS,\kappa)$ by reduction of the pair $(\ma(\sigma),\sigma)$.
Let $\pi$ and $\iota$ be chain equivalences defined by the formulas (3) and (4) in \cite{AlKaLa17} respectively.
Then the maps $\pi_{|C_*(\sS^\alpha)}:C_*(\sS^\alpha)\to C_*(\overline{\sS}^\alpha)$ and
$\iota_{|C_*(\overline{\sS}^\alpha)}:C_*(\overline{\sS}^\alpha)\to C_*(\sS^\alpha)$ defined by restriction are chain homotopy equivalences.
Moreover, the diagram
\[
\begin{array}{ccc}
H_*(\sS^\alpha) & \mapright{H_*(j^{(\alpha,\beta)})} & H_*(\sS^\beta) \\
\mapdown{\cong} & & \mapdown{\cong} \\
H_*(\overline{\sS}^\alpha) & \mapright{H_*(j^{(\alpha,\beta)})} & H_*(\overline{\sS}^\beta)
\end{array}
\]
commutes.
\end{lem}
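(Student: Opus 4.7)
The plan is to reduce the lemma to a filtration-respecting version of the reduction statements recalled in Section~\ref{sec:reductions}, exploiting only one new input: property~(\ref{matching-filtration}). Recall the induced multifiltration $\overline{\cF}=\{\overline{\sS}^\alpha\}$ is defined by $\tau\in\overline{\sS}^\alpha \Leftrightarrow \tau\in\sS^\alpha$; since $\sigma\in\sS^\alpha$ iff $\ma(\sigma)\in\sS^\alpha$ (the reverse direction follows because $\sigma<\ma(\sigma)$ is a face, and $\cF$ is non-increasing on faces), the reduced pair is ``filtration-consistent'' and $\overline{\cF}$ inherits the structure of a multifiltration on $(\overline{\sS},\overline{\kappa})$.

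First I would show that $\pi$ restricts. Looking at formula (3) of \cite{AlKaLa17}, $\pi$ acts as the identity on cells $\eta\neq\sigma,\ma(\sigma)$ except for a correction term involving $\ma(\sigma)$, and $\pi$ sends $\ma(\sigma)$ and $\sigma$ to a chain supported on facets of $\ma(\sigma)$ (respectively to $0$). If $\eta\in\sS^\alpha$ and the correction coefficient $\kappa(\ma(\sigma),\eta)/\kappa(\ma(\sigma),\sigma)$ is nonzero, then $\eta<\ma(\sigma)$, so $\ma(\sigma)\in\sS^\alpha$ by $\cF$ being non-increasing on faces; hence $\sigma\in\sS^\alpha$ too and the projection lands in $C_*(\overline{\sS}^\alpha)$. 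A symmetric argument using formula (4) handles $\iota$: the added term involving $\sigma$ appears only with coefficient $\kappa(\eta,\sigma)$, which is nonzero only if $\sigma<\eta$, forcing $\sigma\in\sS^\alpha$ whenever $\eta\in\sS^\alpha$. Therefore $\pi$ and $\iota$ restrict to maps of subcomplexes as claimed, and both restrictions are chain maps because they are restrictions of chain maps.

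Next I would verify that the chain homotopy $h$ witnessing $\iota\pi\simeq\mathrm{id}$ (and $\pi\iota=\mathrm{id}$) on $C_*(\sS)$ also restricts to $C_*(\sS^\alpha)$. This homotopy, built in \cite{KaMrSl98,AlKaLa17}, is a multiple of $\sigma$ keyed by the coefficient of $\ma(\sigma)$ in the input chain, so by the same face-argument $h(C_*(\sS^\alpha))\subseteq C_*(\sS^\alpha)$. Consequently the restrictions of $\pi$ and $\iota$ are mutually inverse chain homotopy equivalences between $C_*(\sS^\alpha)$ and $C_*(\overline{\sS}^\alpha)$, yielding the vertical isomorphisms of the diagram. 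Commutativity of the diagram is then immediate: because $\pi$ is defined by a single fixed formula independent of the parameter, its restriction at level $\alpha$ and at level $\beta$ agree on the subcomplex $C_*(\sS^\alpha)\subseteq C_*(\sS^\beta)$, so $\pi_{|C_*(\sS^\beta)}\circ j^{(\alpha,\beta)}_\sharp=\overline{j}^{(\alpha,\beta)}_\sharp\circ\pi_{|C_*(\sS^\alpha)}$ at the chain level, and passing to homology gives the commutative square.

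The main obstacle is purely bookkeeping: keeping track of which terms in the formulas for $\pi$, $\iota$, and the homotopy can potentially leave $\sS^\alpha$, and observing that every such term either has a coefficient that forces a face-relation with $\sigma$ or $\ma(\sigma)$ (and hence membership in $\sS^\alpha$ by (b) of the multifiltration definition together with (\ref{matching-filtration})), or is supported on cells of $\overline{\sS}$ whose membership in $\sS^\alpha$ is already guaranteed by the input. Nothing about the acyclicity of the matching is used at this stage; acyclicity enters only when one iterates the single-step reductions.
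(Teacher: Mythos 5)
Your overall strategy is the right one, and it actually supplies more detail than the paper does: the paper states this lemma without proof, importing it from \cite{AlKaLa17} (``Here are the steps towards the main result \ldots proved in \cite{AlKaLa17}''), and the intended argument is exactly the one you outline --- verify that the explicit maps $\pi$, $\iota$ and the chain homotopy of \cite{KaMrSl98,AlKaLa17} preserve the sublevel subcomplexes, using downward closure of the multifiltration on faces together with property~(\ref{matching-filtration}), and then note that commutativity of the square holds already at the chain level because the same formula is applied at every parameter value. However, one step of your bookkeeping is logically backwards. You argue that if $\eta\in\sS^\alpha$ and $\eta$ is a facet of $\ma(\sigma)$, then $\ma(\sigma)\in\sS^\alpha$ ``by $\cF$ being non-increasing on faces.'' That property gives the opposite implication: membership of a cell in $\sS^\alpha$ forces membership of its faces, not of its cofaces, so the inference as stated is false. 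Fortunately it is not needed. The only place a chain supported on facets of $\ma(\sigma)$ arises is in the image under $\pi$ of the generators $\sigma$ and $\ma(\sigma)$ themselves, and there the correct chain of implications is $\sigma\in\sS^\alpha \Rightarrow \ma(\sigma)\in\sS^\alpha$ by~(\ref{matching-filtration}), and then every facet of $\ma(\sigma)$ lies in $\sS^\alpha$ by downward closure. The same two facts, in this order, are what justify the restriction of $\iota$ (whose correction term is a multiple of $\ma(\sigma)$ with coefficient proportional to $\kappa(\xi,\sigma)$, hence nonzero only when $\sigma$ is a face of $\xi\in\sS^\alpha$) and of the homotopy (which is nonzero only on $\sigma$, where it returns a multiple of $\ma(\sigma)$). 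With that single inference repaired, your proof is complete and correct, including the closing remark that acyclicity of the matching plays no role at this single-reduction stage.
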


Lemma~\ref{lem:chain-equivalence} immediately yields the following result.

\begin{thm}\label{th:reduced-filtration-iso}
For every $\alpha\preceq \beta\in \R^k$, $H_*^{\alpha,\beta}(\overline{\sS}) \cong H_*^{\alpha,\beta}(\sS)$.
\end{thm}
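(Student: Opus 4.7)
The plan is to read the theorem off directly from the commutative square supplied by Lemma~\ref{lem:chain-equivalence}. By the definition given in Section~\ref{sec:S-m-filt}, the persistent homology group $H_q^{\alpha,\beta}(\sS)$ is the image of the map $H_q(j^{(\alpha,\beta)}):H_q(\sS^\alpha)\to H_q(\sS^\beta)$ induced by the inclusion of sublevel complexes, and $H_q^{\alpha,\beta}(\overline{\sS})$ is defined analogously using the induced filtration $\overline{\cF}$ on $\overline{\sS}$.

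First, I would invoke Lemma~\ref{lem:chain-equivalence}: restricting $\pi$ and $\iota$ to sublevel subcomplexes yields vertical chain homotopy equivalences $\pi_{|C_*(\sS^\alpha)}$ and $\pi_{|C_*(\sS^\beta)}$, and the induced diagram on homology
\[
\begin{array}{ccc}
H_q(\sS^\alpha) & \mapright{H_q(j^{(\alpha,\beta)})} & H_q(\sS^\beta) \\
\mapdown{\cong} & & \mapdown{\cong} \\
H_q(\overline{\sS}^\alpha) & \mapright{H_q(j^{(\alpha,\beta)})} & H_q(\overline{\sS}^\beta)
\end{array}
\]
commutes, with both vertical arrows being isomorphisms. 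Then, since any isomorphism carries the image of a homomorphism bijectively onto the image of the composed homomorphism, commutativity of the square gives
\[
H_q(\pi)\bigl(\mathrm{im}\,H_q(j^{(\alpha,\beta)})\bigr)=\mathrm{im}\,H_q(\overline{j}^{(\alpha,\beta)}),
\]
and the restriction of $H_q(\pi)$ to these images is an isomorphism. This yields $H_q^{\alpha,\beta}(\sS)\cong H_q^{\alpha,\beta}(\overline{\sS})$, which is the desired conclusion.

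The argument is therefore essentially formal; there is no substantive obstacle beyond what is already packaged into Lemma~\ref{lem:chain-equivalence}. The only point that deserves attention is the naturality of the chain equivalences with respect to the filtration: the fact that $\pi$ and $\iota$ restrict to maps between $C_*(\sS^\alpha)$ and $C_*(\overline{\sS}^\alpha)$ is precisely guaranteed by the matching-filtration compatibility condition~(\ref{matching-filtration}), which in turn is ensured for the matching produced by Algorithm~\ref{alg-match} by Proposition~\ref{lem:partition}. Once this compatibility is in hand, the theorem follows by the short diagram chase above.
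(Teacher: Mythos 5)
Your proposal is correct and follows exactly the route the paper takes: the paper derives Theorem~\ref{th:reduced-filtration-iso} as an immediate consequence of the commutative square with vertical isomorphisms in Lemma~\ref{lem:chain-equivalence}, which is precisely the diagram chase you spell out. Your additional remark on why $\pi$ and $\iota$ restrict to the sublevel subcomplexes (condition~(\ref{matching-filtration})) is a fair elaboration of what the paper leaves implicit.
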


We now let $\sS=\cK$ and consider the matching on $\cK$ produced by Algorithm~\ref{alg-match}. We order $\sA$ in a sequence
\[
\sA=\{\sA(1),\sA(2),\ldots,\sA(n)\}
\]
and set $\sB(i)=\ma(\sA(i))$, $i=1,\ldots,n$. Put $\sS(0)=\sS=\cK$ and
\[
\sS(i)=\overline{\sS(i-1)}=\sS(i-1)\setminus \{\sB(i),\sA(i)\},\;\;i=1,2,\ldots,n.
\]
Since a partial matching defines a partition of $\sS$, we have $\sS(n)=\sC$.

By definition of induced filtration, the condition (\ref{matching-filtration}) carries through to the reduced complex. Consequently, Proposition~\ref{prop:reduced-matching}, Lemma~\ref{lem:chain-equivalence} and Theorem~\ref{th:reduced-filtration-iso} extend by induction to any step of reduction. Hence, for any $\alpha\in \R^k$,  we get a sequence of filtered Lefschetz complexes
\[
(\sS^{\alpha}(0),\kappa^{\alpha}(0)),\; (\sS^{\alpha}(1),\kappa^{\alpha}(1)),\;\ldots,\;(\sS^{\alpha}(n),\kappa^{\alpha}(n)),
\]
where $\kappa^{\alpha}(i)=\overline{\kappa^{\alpha}(i-1)}$, together with a sequence of chain equivalences
\[
\pi^{\alpha}(i): C_*(\sS^{\alpha}(i-1))\to C_*(\sS^{\alpha}(i)),\;\;\iota^{\alpha}(i): C_*(\sS^{\alpha}(i))\to C_*(\sS^{\alpha}(i-1)).
\]
Moreover, for any $\alpha\preceq \beta$, we get the sequence of inclusions
\[
j^{(\alpha,\beta)}(i):\sS^{\alpha}(i)\hookrightarrow \sS^{\beta}(i),
\]
such that the commutative diagram of Lemma~\ref{lem:chain-equivalence} can be applied to the $i$'th iterate. By induction, we get the final result.

\medskip

\begin{cor}\label{cor:homology-S-iso-C}
For every $\alpha\preceq \beta\in \R^k$, $H_*^{\alpha,\beta}(\sC) \cong  H_*^{\alpha,\beta}(\cK)$.
Moreover, the diagram
\[
\begin{array}{ccc}
H_*(\cK^\alpha) & \mapright{H_*(j^{(\alpha,\beta)})} & H_*(\cK^\beta) \\
\mapdown{\cong} & & \mapdown{\cong} \\
H_*(\sC^\alpha) & \mapright{H_*(j^{(\alpha,\beta)})} & H_*(\sC^\beta)
\end{array}
\]
commutes.
\end{cor}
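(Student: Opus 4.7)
The plan is to proceed by induction on the length $n$ of the reduction sequence
\[
\sS(0)=\cK,\;\sS(1),\;\ldots,\;\sS(n)=\sC
\]
constructed immediately before the corollary, using Lemma~\ref{lem:chain-equivalence} as the inductive step. The base case $i=0$ is trivial. For the inductive step passing from $\sS(i-1)$ to $\sS(i)=\overline{\sS(i-1)}$, I would invoke Lemma~\ref{lem:chain-equivalence} applied to the reduction of the single pair $(\sB(i),\sA(i))$. This yields, for every $\alpha\preceq\beta$, a commutative square
\[
\begin{array}{ccc}
H_*(\sS^{\alpha}(i-1)) & \mapright{H_*(j^{(\alpha,\beta)}(i-1))} & H_*(\sS^{\beta}(i-1)) \\
\mapdown{\cong} & & \mapdown{\cong} \\
H_*(\sS^{\alpha}(i)) & \mapright{H_*(j^{(\alpha,\beta)}(i))} & H_*(\sS^{\beta}(i))
\end{array}
\]
with vertical arrows induced by the chain equivalences $\pi^\alpha(i)$ and $\pi^\beta(i)$.

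I would then stack $n$ such squares, producing the square claimed in the corollary with vertical isomorphisms given by the composed chain equivalences $\pi^{\alpha}(n)\circ\cdots\circ\pi^{\alpha}(1)$ and $\pi^{\beta}(n)\circ\cdots\circ\pi^{\beta}(1)$. Commutativity of the outer rectangle is immediate from the commutativity of each individual square. The claimed isomorphism $H_*^{\alpha,\beta}(\cK)\cong H_*^{\alpha,\beta}(\sC)$ is then obtained as the image of the top (respectively bottom) horizontal arrow, since the image of a homomorphism is preserved under pre- and post-composition with isomorphisms. This is precisely Theorem~\ref{th:reduced-filtration-iso} applied $n$ times.

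The main obstacle, and the point that deserves careful attention, is ensuring that the hypotheses of Lemma~\ref{lem:chain-equivalence} remain valid at every stage of the induction. Two properties must be preserved: the acyclicity of the restricted partial matching on $\sS(i)$, which is furnished by Proposition~\ref{prop:reduced-matching}; and the filtration compatibility condition~$(\ref{matching-filtration})$, which must hold for the restricted matching with respect to the induced multifiltration $\overline{\cF}$. The latter is immediate from the definition of the induced filtration (since $\tau\in\overline{\sS}^\alpha$ iff $\tau\in \sS^\alpha$) combined with the corresponding property at the preceding stage, the base case being Proposition~\ref{lem:partition}. Once these preservation facts are recorded, the corollary follows by a direct telescoping of the commutative squares supplied by Lemma~\ref{lem:chain-equivalence}.
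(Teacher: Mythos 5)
Your proposal is correct and follows essentially the same route as the paper: the paper likewise obtains the corollary by induction on the reduction sequence $\sS(0)=\cK,\ldots,\sS(n)=\sC$, stacking the commutative squares of Lemma~\ref{lem:chain-equivalence} and noting that condition~(\ref{matching-filtration}) carries to the induced filtration and that Proposition~\ref{prop:reduced-matching} preserves acyclicity at each step. Your explicit attention to why the hypotheses persist through the induction is exactly the content the paper compresses into the paragraph preceding the corollary.
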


Corollary~\ref{cor:homology-S-iso-C} asserts that the multidimensional persistent homology of the reduced complex is the same as of the initial complex.

An equivalent reduction procedure for linearly filtered complexes is presented in \cite{MiNa}. Its implementation and complexity analysis can be applied in our setting. It is shown there that the worst case computational complexity of the reduction procedure is bounded by the product $O(N \gamma m^2)$, where $N$ is the number of cells in $\cK$, $\gamma$ is the bound on the number of cofaces of any given cell, and $m$ is the cardinality of the set $\sC$. In many practical situations, meshing techniques produce regular tirangulations, where $\gamma$ can be assumed constant. If, moreover, the partial matching algorithm produces $\sC$ significantly smaller than $\cK$, the complexity as a function of $N$ is close to linear.

Recently, new reduction techniques were designed among others by \cite{DloWag} and \cite[Section C.2]{Rat15}. We believe that our Matching Algorithm~\ref{alg-match} can also be applied together with those techniques to speed up the multidimensional persistent homology computation.

%
%

\section{Experimental Results}\label{sec:experiments}

We have successfully applied the algorithms from Section \ref{sec:alg}
to different sets of triangle meshes. In this
section we present two numeric applications - on synthetic and on real data - for which
an acyclic matching preserving multidimensional persistent homology is computed.

In each case the input data is a 2-dimensional simplicial complex $\cK$  and a function $f$ defined on the vertices of $\cK$ with values in $\R^2$.

The first step is to slightly perturb $f$ in order to achieve injectivity on each component as described in Section \ref{sec:prel}. The second step is to construct an index function defined on all the simplices of the complex and satisfying the properties of Lemma \ref{lem:index}. Then we build
the acyclic matching $\ma$ and the partition $(\sA,\sB,\sC)$ in the simplices of the complex using Algorithm \ref{alg-match}. In particular, the number of simplices in $\sC$ out of the total number of simplices of $\cK$ is relevant, because it determines the amount of reduction obtained by our algorithm to speed up the computation of multidimensional persistent homology.

\subsection{Examples on Synthetic Data}

We consider three well known 2-dimensional manifolds - the sphere, the torus, and the Klein bottle - triangulated in different ways.

In the case of spheres, we consider triangulations of five different sizes and we take
$f(x,y,z)=(x,y)$. The triangulated sphere with the least number of simplices is shown in Figure \ref{fig:sphere}, together with the acyclic matching produced by our algorithm (critical cells are in red) and the corresponding $\ma$--paths.

\begin{figure}[t]
\begin{center}
\includegraphics[width=0.80\textwidth]{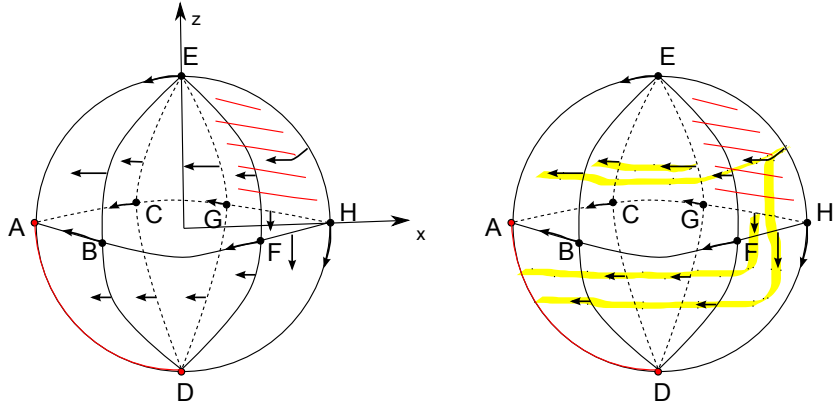}
  \caption{Left: A triangulated sphere $S^2$ in $\R^3$ with the acyclic matching and  critical cells (in red)  associated with a perturbation of the function $f$ given on its vertices by $f(x,y,z)=(x,y)$. Right: the corresponding $\ma$-paths (in yellow).
 }
\end{center}
\label{fig:sphere}
\end{figure}

The comparison with other triangulations of the sphere is shown in Table \ref{tab:sphere}:  the first row  shows  the number of  simplexes in each considered mesh $\cK$; the middle row shows the number of critical cells  obtained by using our matching algorithm to reduce $\cK$; the bottom row shows the ratio between the second and the first lines, expressing them in percentage points.

\begin{table}[h]
\caption{Reduction performance on five different triangulations of the sphere. }
\begin{center}
\begin{tabular}{| c | c | c | c | c |c|}
\hline
 & \tt{sphere\_1}&
 \tt{sphere\_2} & \tt{sphere\_3}
 &  \tt{sphere\_4} & \tt{sphere\_5} \\
 \hline
  $\begin{array}{rrr}\# \cK\\ \# \sC\\ \% \end{array}$ &
 $\begin{array}{rrr}  38 \\4 \\10.5263  \end{array}$ &
 $\begin{array}{rrr}  242\\20 \\8.2645  \end{array}$ &
 $\begin{array}{rrr}   962   \\98 \\10.1871  \end{array}$ &
 $\begin{array}{rrr}  1538    \\178 \\11.5735 \end{array}$&
 $\begin{array}{rrr}  2882    \\278 \\9.6461  \end{array}$\\
 \hline
 \end{tabular}
 \end{center}
\label{tab:sphere}
\end{table}

In the case of the torus, we again consider triangulations of different sizes and we take
$f(x,y,z)=(x,y)$. The numerical results are shown in Table \ref{tab:torus} (see also Figure \ref{fig:torus}).

\begin{table}[h]
\caption{Reduction performance on different triangulations of the torus.}
\begin{center}
\begin{tabular}{| c | c | c | c|}
\hline
 & \tt{torus\_96}&
 \tt{torus\_4608} & \tt{torus\_7200} \\
 \hline
  $\begin{array}{rrr}\# \cK\\ \# \sC\\ \% \end{array}$ &
 $\begin{array}{rrr}  96 \\8 \\8.3333  \end{array}$ &
$\begin{array}{rrr}  4608 \\128 \\ 2.7778  \end{array}$ &
$\begin{array}{rrr}  7200 \\156 \\  2.1667  \end{array}$
\\
 \hline
 \end{tabular}
 \end{center}
\label{tab:torus}
\end{table}

\begin{figure}[ht]
\begin{center}
\includegraphics[width=0.80\textwidth]{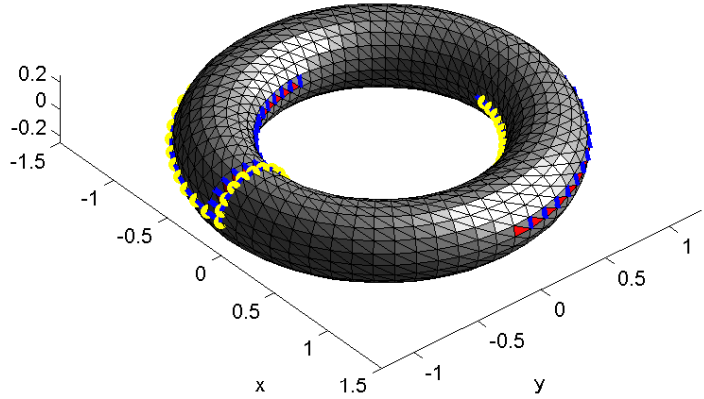}
  \caption{A triangulation of the torus with 7200 simplices, with respect to a component-wise injective perturbation of the function defined of its vertices by $f(x,y,z)=(x,y)$, has 156 critical simplices: 39 critical vertices  out of 1200  (in yellow), 78 critical edges  out of 3600 (in blue),  and 39 critical faces  out of 2400 (in red).
 }
\end{center}
\label{fig:torus}
\end{figure}

Finally, similar tests in the case of an approximation of a Klein bottle immersed in $\R^3$ give the results displayed in Table \ref{tab:klein} and Figure \ref{fig:klein}.

\begin{table}[h]
\caption{Reduction performance on different triangulations approximating an immersion of the Klein bottle.}
\begin{center}
\begin{tabular}{| c | c | c | c| c |}
\hline
 & \tt{klein\_89}&
 \tt{klein\_187} & \tt{klein\_491} &  \tt{klein\_1881}\\
 \hline
  $\begin{array}{rrr}\# \cK\\ \# \sC\\ \% \end{array}$ &
 $\begin{array}{rrr}  89 \\19 \\21.3483 \end{array}$ &
$\begin{array}{rrr}   187\\ 35 \\   18.7166 \end{array}$ &
$\begin{array}{rrr}   491\\ 59\\    12.0163   \end{array}$ &
$\begin{array}{rrr}   1881\\ 257 \\   13.6629   \end{array}$
\\
 \hline
 \end{tabular}
 \end{center}
\label{tab:klein}
\end{table}

\begin{figure}
\begin{center}
\includegraphics[width=0.7\textwidth]{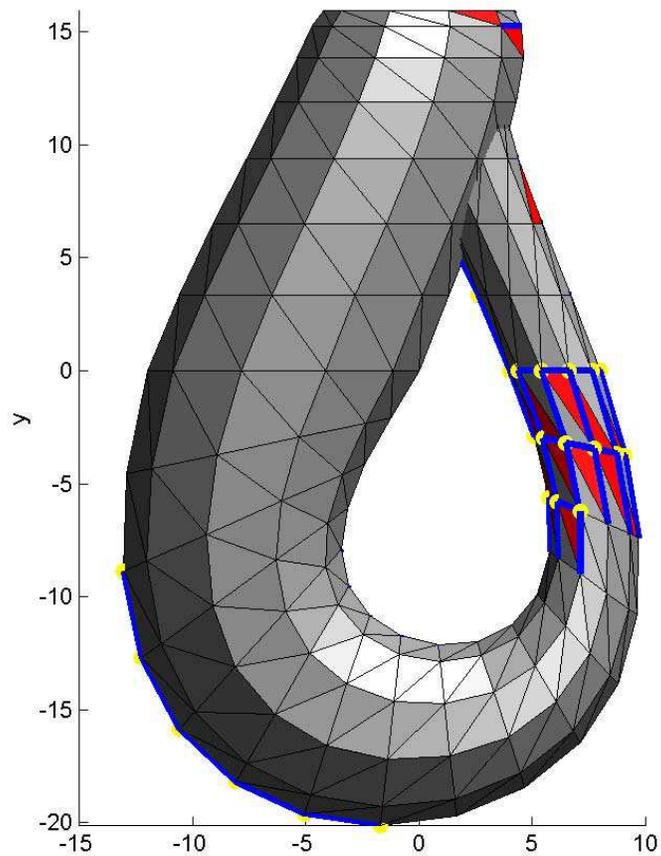}
  \caption{A triangulation of an (almost) Klein bottle immersed in $\R^3$ with 1881 simplices, with respect to a component-wise injective perturbation of the function defined of its vertices by $f(x,y,z)=(x,y)$, has 234 critical simplices: 73 critical vertices  out of 341  (in yellow), 128 critical edges  out of 940 (in blue),  and 56 critical faces  out of 600 (in red).
 }
\end{center}
\label{fig:klein}
\end{figure}

In conclusion, our experiments on synthetic data confirm that the current simplex-based matching algorithm scales well with the size of the complex.

\subsection{Examples on Real Data}
We consider four triangle meshes (available at \cite{sf}).
For each mesh the input  2-dimensional measuring function $f$
takes each vertex $v$ of coordinates $(x, y, z)$ to the pair $f(v) = (|x|, |y|)$.

In Table \ref{tab:space}, the first row shows on the top line the number of vertices in each considered mesh, and in
the middle line same quantities referred to the cell complex $\sC$
obtained by using our matching algorithm to reduce $\cK$. Finally, it also displays in the bottom line the ratio between
the second and the first lines, expressing them in percentage points.
The second and the third rows show similar information for the edges and the faces. Finally, the fourth row show the same
information for the total number of cells of each considered mesh $\cK$. The critical simplexes are displayed in Figure \ref{fig:space}.

Our experiment confirm that the current simplex-based matching algorithms produce a fair rate of reduction for simplices of any dimension also on real data. In particular, it shows a clear improvement with respect to the analogous result presented in \cite{AlKaLa17} and obtained using a vertex-based and recursive matching algorithm.

More experiments for a modified but equivalent version our algorithm can be found in ~\cite{Iur16}.

\begin{table}
\caption{Reduction performance on some natural triangle meshes.}
\begin{tabular}{| c | c | c | c | c |}
\hline
Dataset & \tt{tie} & \tt{space\_shuttle} & \tt{x\_wing} & \tt{space\_station}\\
 \hline
$\begin{array}{rrr}\# \cK_0\\ \# \sC_0\\ \% \end{array} $&
$\begin{array}{rrr}2014\\ 553\\   27.4578
 \end{array}$&
$\begin{array}{rrr}2376\\225\\  9.4697 \end{array}$ &
$\begin{array}{rrr}3099\\614\\ 19.8128 \end{array}$&
$\begin{array}{rrr}5749\\1773\\ 30.8401 \end{array}$ \\
 \hline
 $\begin{array}{rrr}\# \cK_1\\ \# \sC_1\\ \% \end{array}$ &
 $\begin{array}{rrr}5944\\1195\\  20.1043
 \end{array}$&
 $\begin{array}{rrr}6330\\243\\  3.8389 \end{array}$ &
 $\begin{array}{rrr}9190\\1232\\   13.4059 \end{array}$ &
 $\begin{array}{rrr}15949\\2556\\ 16.0261 \end{array}$ \\
 \hline
 $\begin{array}{rrr}\# \cK_2\\ \# \sC_2\\ \% \end{array}$ &
 $\begin{array}{rrr}3827\\539\\  14.0841 \end{array}$ &
 $\begin{array}{rrr}3952\\16\\  0.4049 \end{array}$&
 $\begin{array}{rrr}6076\\603\\   9.9243 \end{array}$ &
 $\begin{array}{rrr}10237\\820\\  8.0102 \end{array}$ \\
 \hline
  $\begin{array}{rrr}\# \cK\\ \# \sC\\ \% \end{array}$ &
 $\begin{array}{rrr}11785\\2287 \\ 19.4060  \end{array}$ &
 $\begin{array}{rrr}12658\\484 \\3.8237  \end{array}$ &
 $\begin{array}{rrr}18365\\  2449\\ 13.3351 \end{array}$ &
 $\begin{array}{rrr}31935\\5149 \\ 16.1234 \end{array}$\\
 \hline
 \end{tabular}
\label{tab:space}
\end{table}

\begin{figure}
\begin{center}
\makebox[\textwidth]{\begin{tabular}{cc}
{\includegraphics[width=0.55\textwidth]{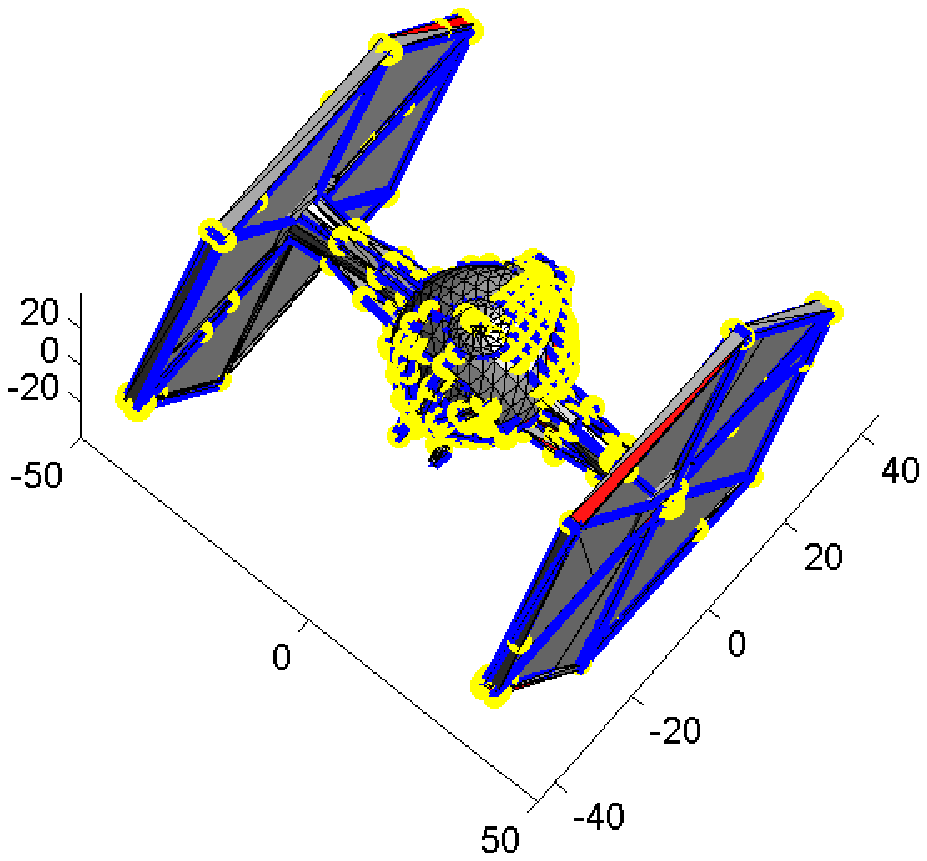}}&
{\includegraphics[width=0.55\textwidth]{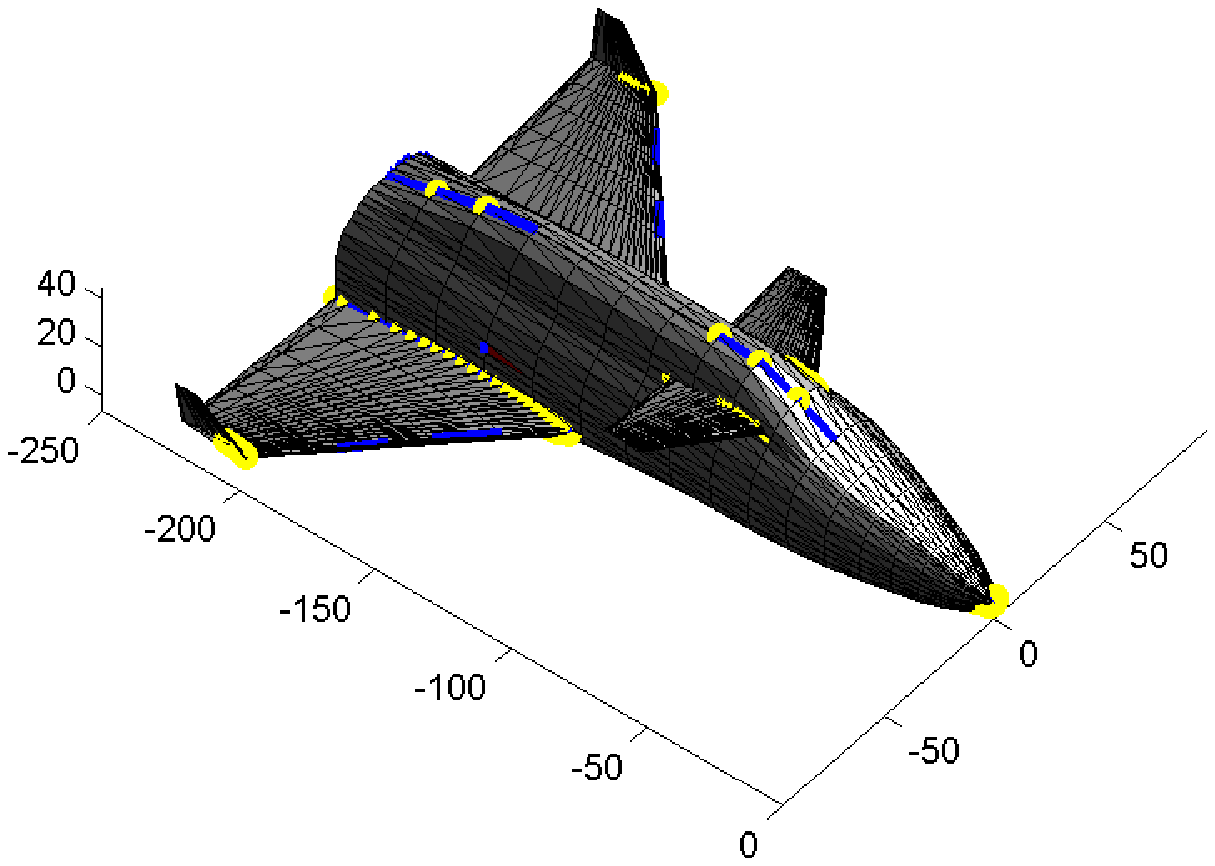} }\\
\tt{tie} & \tt{space\_shuttle} \\
{\includegraphics[width=0.55\textwidth]{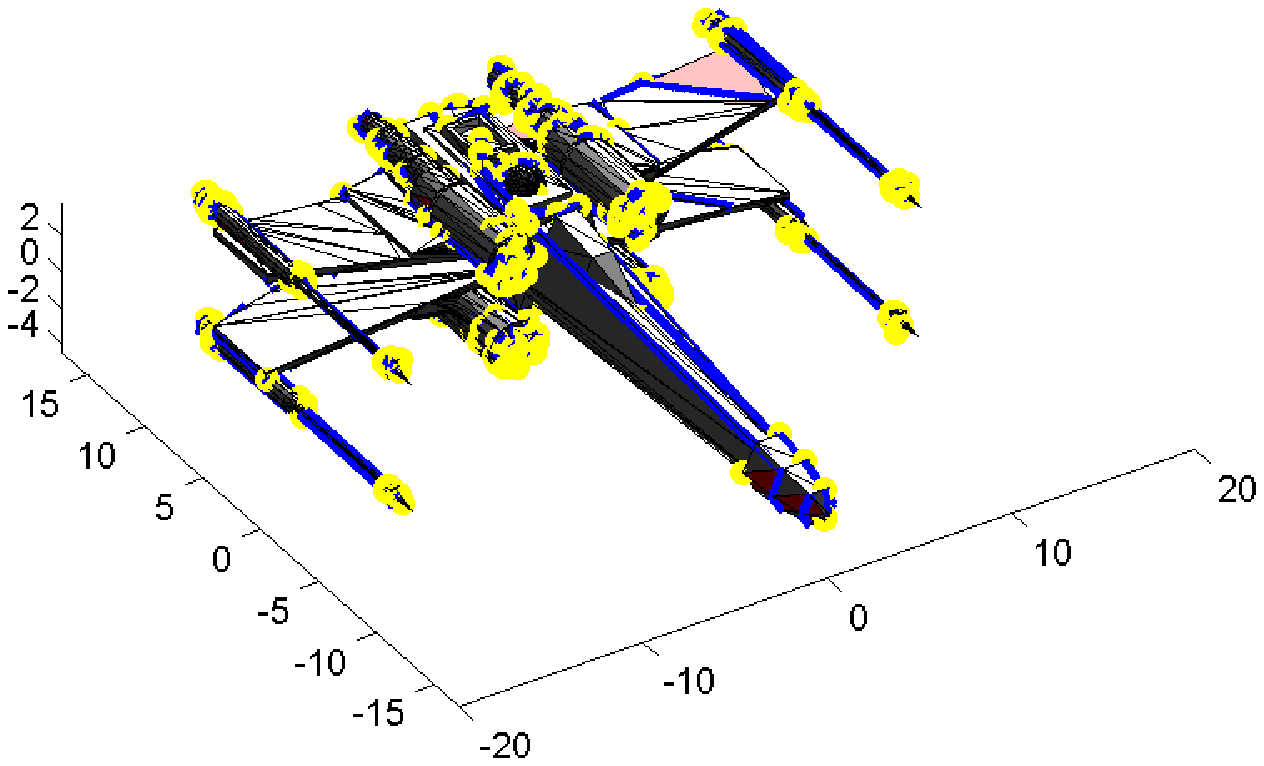}}&
{\includegraphics[width=0.55\textwidth]{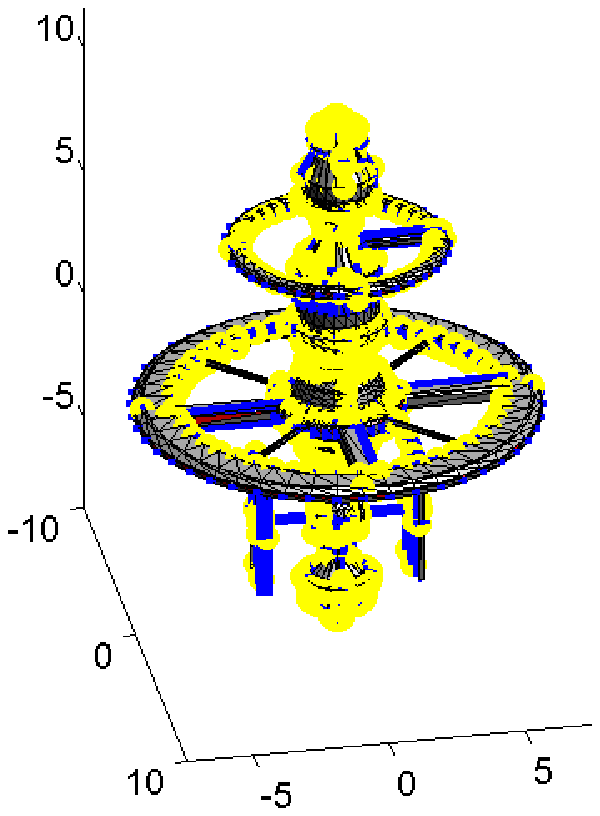}}\\
\tt{x\_wing} & \tt{space\_station}
\end{tabular}}
  \caption{The critical simplexes of four triangle meshes with respect to a component-wise injective perturbation of $f(x,y,z)=(|x|,|y|)$: critical vertices  in yellow,  critical edges  in blue,  and  critical faces  in red.
 }
\end{center}
\label{fig:space}
\end{figure}

\bibliographystyle{abbrv}
\bibliography{biblio}

\medskip


\noindent Department of Computer Science\\
Bishop's University\\
Lennoxville (Qu\'ebec),  Canada J1M 1Z7\\
mallili@ubishops.ca
\\~\\
D\'epartement de math\'ematiques\\
Universit\'e de Sherbrooke,\\
Sherbrooke (Qu\'ebec), Canada J1K 2R1\\
t.kaczynski@usherbrooke.ca
\\~\\
\noindent Dipartimento di Scienze e Metodi dell'Ingegneria\\
Universit\`a di Modena e Reggio Emilia\\
Reggio Emilia, Italy
\\
claudia.landi@unimore.it

\end{document}